\DeclareMathAlphabet{\mpzc}{OT1}{pzc}{m}{it}
\newcommand{\msf}{\mathsf}
\newcommand{\mbb}{\mathbb}
\newcommand{\mcl}{\mathcal}
\newcommand{\msc}{\mathscr}
\newcommand{\mds}{\mathds}
\newcommand{\emp}{\emph}
\newcommand{\tends}{\rightarrow}
\newcommand{\e}{\varepsilon}
\newcommand{\p}{\msf{p}}
\newcommand{\msi}{\msf{i}}
\newcommand{\msd}{\msf{d}}
\newcommand{\bb}{\overline{\beta}}
\newtheorem{thm}{Theorem}
\newtheorem{prop}[thm]{Proposition}
\newtheorem{cor}[thm]{Corollary}
\title{Write Channel Model for\\Bit-Patterned Media Recording}
\author{Aravind~R.~Iyengar, Paul~H.~Siegel,~\IEEEmembership{Fellow,~IEEE} and~Jack~K.~Wolf,~\IEEEmembership{Life~Fellow,~IEEE}% <-this % stops a space
\thanks{The authors are with the Department of Electrical and Computer Engineering and the Center for Magnetic Recording Research, University of California, San Diego, La Jolla, CA 92093 USA (e-mail: aravind@ucsd.edu, psiegel@ucsd.edu, jwolf@ucsd.edu).}
\thanks{This work was supported in part by the Center for Magnetic Recording Research and by the National Science Foundation under the Grant CCF-$0829865$.}
\thanks{Portions of this work were presented at the IEEE International Symposium on Information Theory, Austin, TX, June 13--18, 2010 \cite{iye_10_isit_wcm}. This paper has been accepted for publication in the IEEE Transactions on Magnetics.}}
\newcommand{\twobibs}[2]{#2}
\begin{document}
\maketitle

\begin{abstract}
We propose a new write channel model for bit-patterned media recording that reflects the data dependence of write synchronization errors. It is shown that this model accommodates both substitution-like errors and insertion-deletion errors whose  statistics are determined by an underlying channel state process. We study information theoretic properties of the write channel model, including the capacity, symmetric information rate, Markov-$1$ rate and the zero-error capacity.
\end{abstract}

\begin{keywords}
Bit-patterned media, High-density magnetic recording, Channel capacity, Symmetric information rate, Markov-$1$ rate, Zero-error capacity.
\end{keywords}

\section{Introduction}
\PARstart M{agnetic} recording channels are typically modeled as binary-input intersymbol interference (ISI) channels, also called \emp{partial response} channels \cite{kob_71_tcomtech_codmagrec, kab_75_tcom_prs}. An implicit assumption made in these channel models is that the data are correctly written on the disk and that errors occur only during the readback process. While this is a realistic assumption in conventional recording on continuous media, it is questionable in the context of certain advanced recording technologies, such as bit-patterned media (BPM) recording, that may be required to achieve higher storage densities. In this paper, we will examine some of the underlying causes of errors in the recording process, particularly in BPM recording, and then propose a new probabilistic write channel model that captures some of the data dependence of these write errors. Thus, the input to this write channel model is the data sequence to be recorded and the channel output is the ``noisy'' sequence that actually gets stored on the medium.   This leads to a description of the full data recording and readback process as a cascade of an imperfect write channel and a noisy (partial response) readback channel. 

Hu et al. \cite{hu_07_tmag_bpmwc} proposed a model for the BPM recording channel in which the write channel was a  binary symmetric channel (BSC), and the readback channel was a linear, intersymbol-interference (ISI) channel with additive noise. They proposed and evaluated detection methods for this channel, and they investigate achievable information rates for such a system. In \cite{iye_09_all_bpm}, we considered an idealized cascaded channel model in which the write channel was again a BSC and the readback channel was a memoryless, binary-input, additive white Gaussian noise (AWGN) channel.  We studied theoretical properties of this channel, as well as the decodable regions of LDPC codes under a number of decoding algorthms. In this paper, we focus on a new write channel model, and we determine and compare bounds on several relevant information-theoretic limits: capacity, symmetric information rate (SIR), information rate with first-order Markov inputs (Markov-$1$ rate), and zero-error capacity.

The remainder of the paper is organized as follows. In Section \ref{sec_bpmr}, we start with a brief description of the write process in bit-patterned media recording, highlighting the main factors leading to write errors. We review some useful mathematical notation in Section \ref{ssec_not}, and then present the new probabilistic, data-dependent write channel model in Section \ref{ssec_model}. We discuss the  types of write errors that can occur -- including substitution-like errors and insertion-deletion errors -- and explain their connection to the channel state process that underlies the model. Two classes of channel state processes are introduced: a Bernoulli state process (Section \ref{ssec_suberr}) and a  binary Markov state process (Section \ref{ssec_iderr}). In Section \ref{ssec_tdmr} we make the observation that, although proposed as a model for BPM recording, the Markov state  channel model is also relevant to high-density magnetic recording using conventional granular media. 
In Section \ref{sec_suberr} we give bounds on the capacity, the SIR, and the Markov-$1$ rate of the Bernoulli state channel. Section \ref{sec_insdel} gives similar bounds on the capacity for the binary Markov state channel. The SIR is numerically computed for both of the channels considered in Section \ref{sec_chnmod}. In Section \ref{sec_kid}, we  introduce a generalization of  binary Markov state channels, namely the $K$-ary Markov state channels. For one such generalized channel, we numerically  estimate the SIR and derive bounds on the channel capacity. Finally, in Section \ref{sec_zec}, we explore the zero-error capacity of the proposed class of write channel models.   Section \ref{sec_conc} provides concluding remarks.

\section{Bit-Patterned Media Recording} \label{sec_bpmr}
A conventional magnetic recording medium is a continuous film of magnetic \emp{grains} that coats the surface of the disk substrate. Each grain is an atomic magnetic unit that assumes one of two possible magnetic states. A group of grains together form a \emp{bit-cell}, an entity that stores one bit of information. Therefore, as the areal information density is increased, the number of grains forming a bit-cell reduces. One of the problems with high-density magnetic recording in conventional media is the \emp{super-paramagnetic effect}, wherein the magnetic states of individual grains change due to the influence of neighbouring grains or due to changes in temperature. When the areal information density is increased to point where there are only a few grains per bit, such uncontrolled changes in the magnetic states of grains are detrimental to reliable information storage.

Bit-patterned media recording (BPMR) proposes to get around this problem by making use of patterned magnetic islands separated by non-magnetic material \cite{whi_97_tmag_bpm}. However, this new structure of the magnetic medium introduces technical challenges not seen in recording on conventional media. An immediate requirement of this media structure is near-perfect synchronization of the write process to ensure that the write head is positioned correctly over the magnetic islands on this disk, i.e., to ensure that the head is positioned within the so-called \emp{writing window zone} of the islands \cite{liv_09_tmag_wwz}. Assuming that this write synchronization is achieved through the use of timing synchronization patterns, there is a possiblity of frequency and/or phase mismatch leading to incorrect writing. Furthermore, even without a timing mismatch, imperfections in the configuration of the patterned magnetic islands may cause writing errors. Finally, as in conventional magnetic recording,  the \emp{switching field distribution} of magnetic grains may contribute to errors in the write process. We will refer to write errors induced by any of these mechanisms as \emp{written-in errors}.

Another important feature of BPMR is the geometry of the writing process. Along the \emp{down-track} direction, the span of influence of the magnetic write field is typically larger than the spacing between the islands. Therefore, at any given time, the write head influences multiple adjacent islands. So, in the process of recording a bit on a specified island, the bit value is also recorded  on a certain number of subsequent islands, with these islands themselves being overwritten in the future by subsequent bits. Fig.~\ref{fig_bpm} gives an illustration of this idealized write process.
\begin{figure}[!ht]
\centering
\includegraphics[width=.95\linewidth]{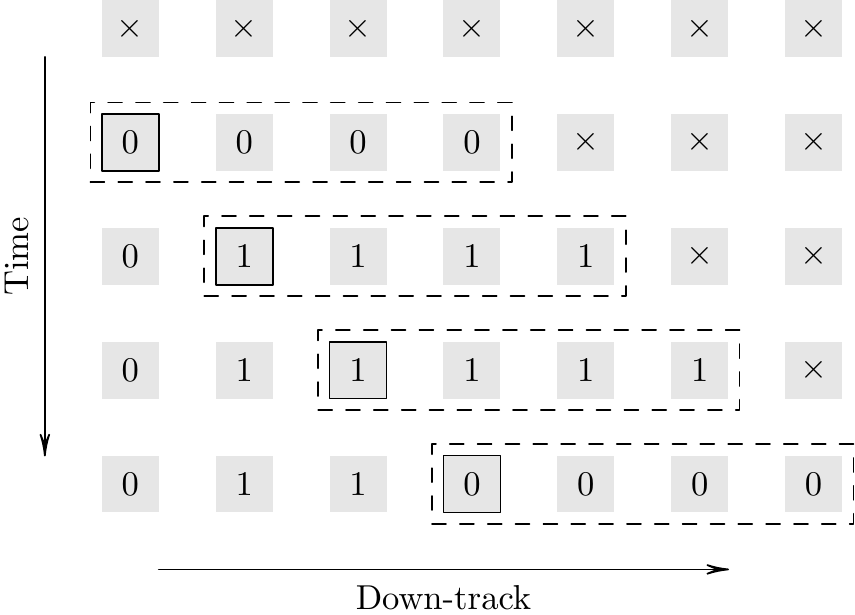}
\caption{Example illustrating writing one bit on each island in BPMR with a write head of writing span $D = 3$. Depicted here are snapshots of the top view of the disk and write mechanism. Initially, all islands, represented here using gray squares, have unknown magnetic states represented by $\times$, which could be $0$ or $1$. At the first time instant, the write head, shown here using a dashed rectangle, is positioned over the first four islands. In this position, the first island is written with the data bit $0$. The three subsequent islands are also written with the first data bit $0$, before being overwritten with their own data. After each bit is written, the head is moved over to the next island along the down-track direction. At each position of the write head, the corresponding ``current'' island is highlighted with a black square boundary.}
\label{fig_bpm}
\end{figure}
We refer to the number of subsequent islands influenced by the write head as the \emph{writing span}, $D$. We will assume throughout this paper that $D \ge 1$, which implies that write process has memory and, as a consequence, the write channel is data-dependent. 

\section{Write Channel Model} \label{sec_chnmod}
\subsection{Notation} \label{ssec_not}
We will denote random variables using capital letters $X$, $Y$, $Z$; random vectors as $X_i^j = (X_i, X_{i + 1}, \cdots, X_j), i, j \in \mbb{N}$ when $i \le j$, a null variable when $i > j$ or when $j$ is non-positive. Infinite-dimensional random vectors are considered to be discrete-time random processes and are denoted using calligraphic letters $\mcl{X}$, $\mcl{Y}$, $\mcl{Z}$. The alphabets over which random processes are defined are denoted by $\mds{X}$, $\mds{Y}$, $\mds{Z}$. Unless otherwise stated, we will denote the channel input variable, process and alphabet using the letters $X$, $\mcl{X}$ and $\mds{X}$ respectively. Similarly, $Y$, $\mcl{Y}$, $\mds{Y}$ will correspond to the channel output, and $Z$, $\mcl{Z}$, $\mds{Z}$ to the channel state ($\mds{Z}$ is not to be confused with the set of integers $\mbb{Z}$). For a real number $a \in [0, 1]$, we define $\overline{a} = 1 - a$.

\subsection{Model} \label{ssec_model}
We assume that the input, output and the channel state alphabet is the binary set $\mds{X} = \mds{Y} = \mds{Z} = \{0, 1\}$. The write channel model can be written as 
\begin{equation}
\label{eq_chnmod}
Y_i = X_i \oplus \left( X_i \oplus X_{i-1}\right) \otimes Z_i,
\end{equation}
where the state process $\mcl{Z}$ is independent of the inputs and the outputs, and $\oplus$ and $\otimes$ represent addition and multiplication in $GF(2)$, respectively. The channel state $Z_i$ is not to be confused with the magnetic state of the island.

It is natural to assume that the channel input and output alphabets are binary since the (intended and actual) magnetic states of the islands can be one of two possible states. The channel state $Z_i$ is a random variable that represents a failure in writing  due to one of the conditions mentioned in the previous section, i.e., when $Z_i = 1$ the write head can be assumed to have failed in writing the intended bit. However, this does not necessarily imply a written-in error because there would be no error if the bit to be written is the same as the existing magnetic state of the island. This is captured by the term $(X_i \oplus X_{i - 1})$ that is multiplied with $Z_i$ in \eqref{eq_chnmod}. This ``noise'' term $(X_i \oplus X_{i - 1}) \otimes Z_i$ is justified because we assume that the timing mismatch or the irregularity of island patterns can cause the write head to be positioned, in the worst case, on the island immediately following the correct island. We will continue with this assumption until Section \ref{sec_kid}, where we construct a more generial write channel model. Also note that when the first bit is written late, i.e. when $Z_1 = 1$, we have $Y_1 = X_0$ which represents the pre-existing magnetic state of the first island. We will assume that $X_0$ is equally likely to be a $0$ or a $1$. Similarly, when the last bit is written late, the last island has a bit in error if the last and the penultimate bits are different. In either case, $n$ magnetic islands are read and their contents are interpreted as the $n$ data bits so that there is no blocklength mismatch.

Based on the $Z_i$ sequence, the channel in \eqref{eq_chnmod} appears to produce different types of errors. When the $Z_i$ sequence consists of isolated ones, the channel appears to produce \emp{substitution} errors. This is illustrated in the Fig.~\ref{fig_subeg}.
\begin{figure}[!ht]
\centering
\includegraphics[width=.95\linewidth]{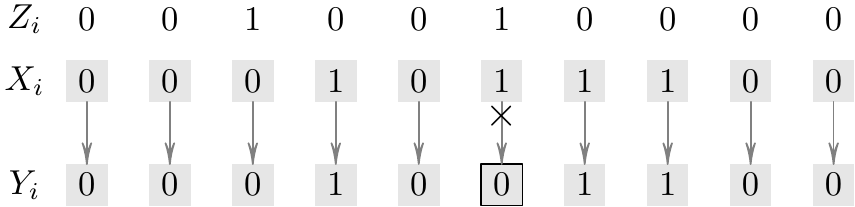}
%\vspace{0.5cm}
\caption{An example of substitution-like errors produced by the write channel. The sequence $Z_i$ gives the channel state for each magnetic island. In this example, $Z_i = 1$ for the third and the sixth islands. The sequence $X_i$ gives the intended magnetization of the islands, i.e., the data to be written. Taking into account the channel states $Z_i$, the sequence $Y_i$ shows the resulting island magnetizations. Note that substitution-like errors occur only when $Z_i = 1$, as was the case with the sixth island here, highlighted with a box around the island in the $Y_i$ sequence. However, not every $Z_i = 1$ results in an error, as illustrated by the third island.}
\label{fig_subeg}
\end{figure}
Such substitution-like errors can occur when the write head misses islands at random and independently of its success in writing on previous islands. Noting that when $Z_i = 0$, $Y_i = X_i$ so that the output reproduces the input exactly, and when $Z_i = 1$, $Y_i = X_{i-1}$ so that the output reproduces the input with a delay of one time instant, we can see that another way to write the relation in \eqref{eq_chnmod} is
\begin{equation} \label{eq_altchnmod}
Y_i = X_{i - Z_i}.
\end{equation}
Now, when the $Z_i$ sequence consists of long runs of ones, the channel appears to produce paired \emp{insertion-deletion} errors, with insertions accompanying $0 \rightarrow 1$ channel state transitions and deletions accompanying $1 \rightarrow 0$ channel state transitions, as shown in Fig.~\ref{fig_ideg}.
\begin{figure}[!ht]
\centering
\includegraphics[width=.95\linewidth]{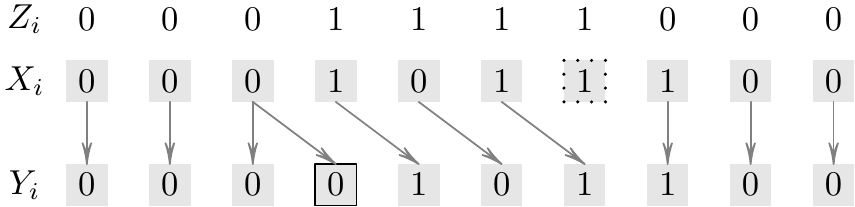}
%\vspace{0.5cm}
\caption{An example of an insertion-deletion error. Due to the channel state sequence $Z_i$, the channel inputs are transformed as shown by the arrows between the $X_i$ and $Y_i$ sequences with the resulting magnetic states of the islands shown in the sequence $Y_i$. The inserted bit in the $Y_i$ sequence accompanying the channel state transition $0\rightarrow 1$ is shown with a box around the corresponding island. The deleted bit from the $X_i$ sequence accompanying the channel state transition $1\rightarrow 0$ is shown with a dotted box around the island.}
\label{fig_ideg}
\end{figure}
These insertion-deletion errors can happen as a result of timing synchronization errors, wherein there is a frequency mismatch between the islands and the write head; or as a result of a group of islands being separated farther than usual or having larger switching fields.

The channel in \eqref{eq_chnmod} and \eqref{eq_altchnmod} is completely defined by specifying the statistics of the $\mcl{Z}$ process. We will consider the channel under two different statistical assumptions on the $\mcl{Z}$ process in the following, and show how this difference in statistics changes the typical behaviour of the channel.

\subsubsection{Bernoulli state channel} \label{ssec_suberr}
When the channel state process $\mcl{Z}$ is an i.i.d. Bernoulli$(\p)$ process, i.e., $\mcl{Z} \sim \mcl{B}(\p)$, we will call the channel the \emp{Bernoulli state channel}. In this case, the channel is completely specified by the parameter $\p$. We shall henceforth denote this channel by $W(\p)$. Typically, for small values of the parameter $\p$ we can expect this channel to produce errors resembling substitution errors (See Fig.~\ref{fig_subeg}).

\subsubsection{Binary Markov state channel} \label{ssec_iderr}
When the channel state process $\mcl{Z}$ is a first-order binary Markov process\footnote{The superscript in the notation $\mcl{M}^{(2)}_1(\p_\msi, \p_\msd)$ denotes the alphabet size over which the process is defined and the subscript denotes the memory in the process. The two arguments give the transition probabilities between the two states.}% In general, when the state alphabet is $\{0, 1, \cdots, K - 1\}$, the first of the two parameters will denote the probability of transition between the states $i \rightarrow (i + 1), i \in \{0, 1, \cdots, K - 2\}$ and the second that of $i \rightarrow (i - 1), i \in \{1, 2, \cdots, K - 1\}$.}
 $\mcl{Z} \sim \mcl{M}_1^{(2)}(\p_{\msi}, \p_{\msd})$, i.e., $\mbb{P}\{Z_i | Z_1^{i - 1}\} = \mbb{P}\{Z_i | Z_{i - 1}\}$, $\mbb{P}\{Z_i = 1 | Z_{i-1} = 0\} = \p_\msi$ and $\mbb{P}\{Z_i = 0 | Z_{i - 1} = 1\} = \p_\msd$, we will refer to the channel as the \emp{binary Markov state channel} and denote it by $W(\p_\msi, \p_\msd)$. As noted earlier, we can expect such a channel to typically produce paired insertion-deletion errors (See Fig.~\ref{fig_ideg}). Hence, the parameters $\p_\msi$ and $\p_\msd$ can be thought of as insertion and deletion probabilities, respectively, of the channel.

We note here that this channel differs from the most general insertion-deletion channel in two ways. First, the inserted bit is always the same as the last written bit. Whereas this seems to be a serious limitation of the model in comparison with the general insertion channel model, it is to be noted that in most practical insertion-deletion channels---channels with synchronization errors, or sticky channels \cite{mit_08_tit_sticky}---the inserted bit is usually the one last written (or transmitted), rather than being a random bit. Second, the insertions and deletions are paired so that one never sees two consecutive insertions or deletions. While this seems to limit the scope of the channel, a straightforward extension to channels with a finite maximum number of consecutive insertions or deletions is possible. However, the estimation of the channel characteristics becomes more complex as the maximum number of consecutive insertions increases (See Section \ref{sec_kid}). Nevertheless, the binary Markov state channel gives us good insight into more general insertion-deletion channels. Moreover, unlike much of the previous work in the literature wherein channels cause only insertion \cite{mit_08_tit_sticky} errors or only deletion \cite{dig_01_all_delchn,dig_06_tit_finbufchn,dri_06_tit_lbcapdc,mit_06_tit_lbcapdc,dig_07_isit_capubdc, fer_10_tit_bdccap} errors, this model considers both insertions and deletions in the same setting. Among works that handle both insertions and deletions, the set up in this paper resembles the duplication-deletion channel of \cite{kir_10_tit_ddccap} more than it does the generic insertion-deletion channel of \cite{hu_10_tcom_insdel}. However, the channel model considered here differs from those in the aforementioned papers in that the insertion-deletion process has memory, i.e. the insertions and deletions are not i.i.d.

\subsection{High-density recording with granular media} \label{ssec_tdmr}
The channel model in \eqref{eq_chnmod} can also be used to describe high-density magnetic recording on conventional granular magnetic media\cite{woo_09_tmag_hdmr}. Although media granularity is typically considered as a two-dimensional phenomenon, we  consider granularity  only in one-dimension -- along the down-track direction,  and assume adjacent tracks to be independent. This simplification allows us to establish lower bounds for performance over the two-dimensional channel as proposed in \cite{woo_09_tmag_hdmr}.

Media granularity in one-dimension results in written-in errors as follows. At storage densities of the order of $1$ bit per grain, the variation in grain size plays an important role in deciding the reliability of data storage. In this case, bit-cells are at most as large as individual grains. We assume that the grains are all $1$ or $2$ bit-cells in size, and that the magnetic state of each grain is decided by the last bit written on them. This is depicted in Fig.~\ref{fig_tdmr}.
\begin{figure}[!ht]
\centering
\includegraphics[width=.95\linewidth]{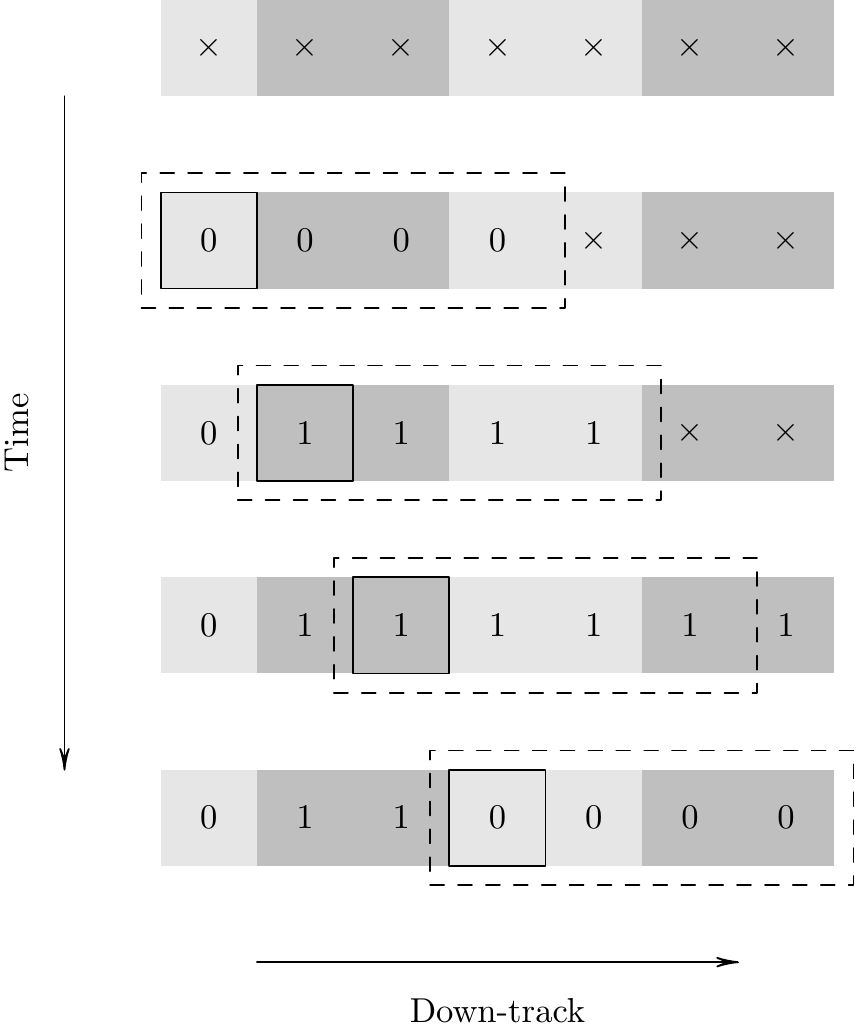}
%\vspace{0.1cm}
\caption{Writing on granular media. Individual bit-cells are shown as $\times$'s in the first row, corresponding to the magnetic states read from these cells, which could be $0$'s or $1$'s. Also shown in shades of gray are the magnetic grains: grains that comprise one bit-cell are shown in light gray and those that comprise two bit-cells are in a darker shade of gray. The two bit-cells comprising a grain of size $2$ always have the same magnetic state. As in the case of BPMR, the write head spans multiple bit-cells, as shown by the dashed rectangles. The ``current'' bit-cell at each time instant is shown with a black square.}
\label{fig_tdmr}
\end{figure}
Grains that are $2$ bit-cells large will result in written-in errors if the two bits written on them are different. Fig.~\ref{fig_tdmeg} gives an illustration of the written-in errors in this scenario.
\begin{figure}[!ht]
\centering
\includegraphics[width=.95\linewidth]{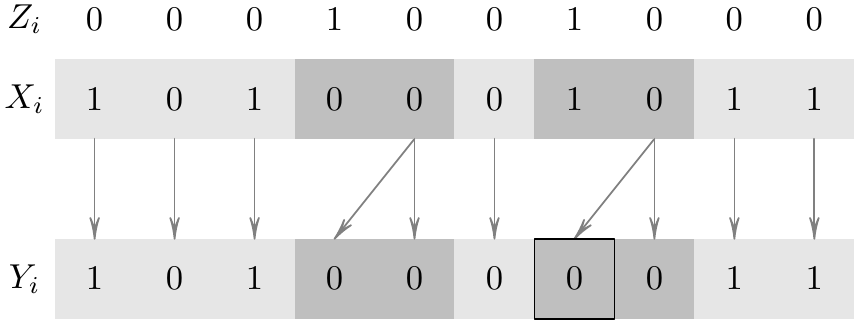}
%\vspace{0.1cm}
\caption{Written-in errors due to media granularity. The $\mcl{Z}$ process represents the grain pattern with $Z_i$ being $1$ when the corresponding bit-cell is the first bit-cell in a grain that comprises two bit-cells. The $X_i$ sequence shows the data to be written. The arrows between the $X_i$ and $Y_i$ sequences show the transformation of information according to the grain pattern, resulting in the sequence $Y_i$ being stored. The written-in errors are shown with a box around the corresponding bit-cell.}
\label{fig_tdmeg}
\end{figure}
Comparing Figures \ref{fig_ideg} and \ref{fig_tdmeg}, it is easy to see that the channel model in this case can be written as
\[
Y_i = X_i \oplus (X_i \oplus X_{i + 1}) \otimes Z_i.
\]
Using a simple time-reversal argument, it can be seen that this is exactly the channel in \eqref{eq_chnmod}. However, in this case, the $Z_i$ sequence cannot have two consecutive ones, i.e., it satisfies the $(1, \infty)$ run-length constraint \cite[Chap. 4]{imm_99_bok_codsto}. We choose to model this as the $W(\p_\msi, 1)$ channel since any realization of the $\mcl{M}_1^{(2)}(\p_\msi, 1)$ process meets the run-length requirement. In \cite{maz_10_isit_hdmr}, the authors consider this channel and derive upper bounds for the achievable rates over the channel. Our focus here will be on obtaining bounds on the achievable information rates for the general channel in  \eqref{eq_chnmod} in the context of the two channel state processes defined in Section \ref{ssec_model}.

\section{Bernoulli State Channel} \label{sec_suberr}
The channel space for the Bernoulli state channel defined in the previous section is parameterized by $\p \in [0,1]$. Since the channel has memory, its capacity is given as
\begin{align}
C &= \lim_{n\tends\infty}\sup_{\mbb{P}\{X_1^n\}}\frac{1}{n}I\left(X_1^n;Y_1^n\right). \notag
\end{align}
All the entropy and mutual information terms calculated are in bits, so that the capacity is always measured in bits per channel use. We will denote by $C(\p)$ the capacity of the Bernoulli state channel with parameter $\p$.

\begin{prop}[Channel symmetry] \label{prop_chnsym}
$C(\p) = C(\overline{\p})$.
\end{prop}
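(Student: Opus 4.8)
The plan is to show that $W(\bp)$ is the time-reversal of $W(\p)$, after which the identity follows because time-reversal preserves mutual information while merely permuting the input distributions. The starting observation is that, by \eqref{eq_altchnmod}, for \emph{any} value of the parameter the channel sets each $Y_i$ equal to $X_i$ or to $X_{i-1}$, with the choice governed by an i.i.d.\ coin; the parameter only decides which outcome of the coin selects which symbol. Concretely, if $\mcl{A}$ denotes an i.i.d.\ $\mcl{B}(\p)$ process, then driving $W(\bp)$ with the complemented state $1\oplus A_i\sim\mcl{B}(\bp)$ turns \eqref{eq_altchnmod} into $Y_i = X_{i-1+A_i}$, i.e.\ $Y_i=X_i$ if $A_i=1$ and $Y_i=X_{i-1}$ if $A_i=0$, whereas $W(\p)$ driven by $\mcl{A}$ gives $Y_i=X_{i-A_i}$, i.e.\ $Y_i=X_i$ if $A_i=0$ and $Y_i=X_{i-1}$ if $A_i=1$.

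Next I would apply the time reversal. For a fixed blocklength $n$ and an input to $W(\bp)$, set $\widetilde X_j := X_{n-j}$, $\widetilde A_j := A_{n+1-j}$ (still i.i.d.\ $\mcl{B}(\p)$) and $\widetilde Y_j := Y_{n+1-j}$; substituting into $Y_i = X_{i-1+A_i}$ gives $\widetilde Y_j = \widetilde X_{j-1}$ when $\widetilde A_j = 1$ and $\widetilde Y_j = \widetilde X_j$ when $\widetilde A_j = 0$, which is exactly $\widetilde Y_j = \widetilde X_{j-\widetilde A_j}$ --- the law of $W(\p)$ with $\widetilde A$ as its $\mcl{B}(\p)$ state. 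Since reversing the input block and reversing the output block are bijections, the mutual information $I(X_1^n;Y_1^n)$ computed for $W(\bp)$ equals $I(\widetilde X_1^n;\widetilde Y_1^n)$ computed for $W(\p)$, and reversal is a bijection between the two channels' sets of input distributions. Normalizing by $n$ and letting $n\tends\infty$ then yields $C(\bp)=C(\p)$.

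The one point needing care is the boundary convention. The ``pre-island'' bit $X_0$ assumed uniform in \eqref{eq_chnmod} is read from the front of the block, whereas its time-reversed image lies at the back of the reversed block, so the correspondence above is exact only away from the ends. This is harmless for the capacity: changing the convention at a single boundary symbol perturbs $I(X_1^n;Y_1^n)$ by at most one bit, which vanishes after division by $n$. Disposing of this $O(1/n)$ discrepancy cleanly is essentially the only work needed to turn the sketch into a full proof.
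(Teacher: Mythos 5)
Your argument is correct and is essentially the paper's own proof: the paper likewise complements the state ($\overline{Z}_i = Z_i \oplus 1$) and time-reverses via the substitution $S_i = X_{i-1}$, absorbing the same $O(1)$ boundary discrepancy into the $n\tends\infty$ limit. The only difference is notational (explicit index reversal versus the two simultaneous algebraic forms of \eqref{eq_chnmod}).
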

\begin{proof}
Let us denote by $I_{\p}(X_1^n; Y_1^n)$ the mutual information between the vectors $X_1^n$ and $Y_1^n$ when the channel parameter is $\p$. Note that the channel output can be simultaneously written as
$$
Y_i = ( X_i \otimes \overline{Z}_i ) \oplus ( X_{i-1} \otimes Z_i)
$$  
\noindent
and
$$
Y_i = ( S_i \otimes Z_i ) \oplus ( S_{i + 1} \otimes \overline{Z}_i )
$$
where $S_i = X_{i - 1}$ and $\overline{Z}_i = Z_i \oplus 1$. Thus
\[
I_{\p}(X_i; Y_1^n | X_1^{i - 1}) = I_{\overline{\p}}(S_{n - i + 1}; Y_1^n | S_{n - i + 2}^n) 
\]
for all but a vanishing fraction of indices $i \in \mbb{N}$, as $n\tends\infty$.  
Therefore,
\begin{align}
C(\p) &= \lim_{n\tends\infty}\sup_{\mbb{P}\{X_1^n\}}\frac{1}{n}I_{\p}(X_1^n; Y_1^n) \notag
\end{align}
\begin{align}
&= \lim_{n\tends\infty}\sup_{\mbb{P}\{X_1^n\}}\frac{1}{n}\sum_{i=1}^nI_{\p}(X_i; Y_1^n | X_1^{i - 1}) \notag \\
&= \lim_{n\tends\infty}\sup_{\mbb{P}\{X_1^n\}}\frac{1}{n}\sum_{i=1}^nI_{\overline{\p}}(S_{n - i + 1}; Y_1^n | S_{n - i + 2}^n) \notag \\
&= \lim_{n\tends\infty}\sup_{\mbb{P}\{X_1^n\}}\frac{1}{n}I_{\overline{\p}}(S_1^n; Y_1^n) \notag \\
&= \lim_{n\tends\infty}\sup_{\mbb{P}\{X_1^n\}}\frac{1}{n}I_{\overline{\p}}(X_1^{n - 1}; Y_1^n) \notag \\
&= \lim_{n\tends\infty}\sup_{\mbb{P}\{X_1^n\}}\frac{1}{n}I_{\overline{\p}}(X_1^n; Y_1^n) = C(\overline{\p}). \notag
\end{align}
The channel space can therefore be reduced to the interval $\p\in[0,\frac{1}{2}]$. Further note that the same symmetry argument holds for not just the rate-maximizing input distribution, but for all input distributions.
\end{proof}

The capacity of the Bernoulli state channel is upper bounded by the achievable rate for a genie-aided decoder, i.e., one with the $\mcl{Z}$ process realization known. Given the realization of the $\mcl{Z}$ process, the inserted bits and the positions of the deleted bits are known so that the Bernoulli state channel is equivalent to a correlated erasure channel with average erasure rate $\mbb{P}\{Z_{i - 1} = 1, Z_i = 0\} = \p\overline{\p}$. The resulting erasure channel is a correlated channel since, erasures being dependent on $1 \rightarrow 0$ transitions in the $\mcl{Z}$ process, two consecutive bits cannot be erased. Therefore,
\begin{equation} \label{eq_capuber}
%C(\p) \le 1 - \p\overline{\p} = C_{g\e}(\p)
C(\p) \le 1 - \p\overline{\p} \stackrel{\rm \Delta}{=} C_{g\e}(\p)
\end{equation}
since the capacity of a correlated erasure channel is the same as that of a memoryless erasure channel with the same erasure probability. We call this upper bound, $C_{g\e}(\p)$, the \emp{genie-erasure capacity} of the channel $W(\p)$.

Consider 
\begin{align}
\frac{1}{n}I\left(X_1^n; Y_1^n\right) &= \frac{1}{n}H(Y_1^n) - \frac{1}{n}H(Y_1^n|X_1^n) \notag \\
&\stackrel{(a)}{=} \frac{1}{n}H(Y_1^n) - \frac{1}{n}\sum_{i=1}^nH(Y_i|X_{i-1}^{i}) \notag \\
&= \frac{1}{n}H(Y_1^n) - \frac{h_2(\p)}{n}\sum_{i=1}^n\mbb{P}\{X_i\ne X_{i-1}\}, \label{eq_mi}
\end{align}
where $h_2(\cdot)$ is the \emp{binary entropy function} \cite{cov_06_bok_infoth}. The equality labelled $(a)$ follows from the definition of $Y_i$ in \eqref{eq_chnmod} and the fact that $\mcl{Z} \sim \mcl{B}(\p)$. Since the 
information-rate-maximizing input distribution is unknown, we will now derive lower bounds to the capacity by making certain assumptions about the statistics of the input process $\mcl{X}$.

\subsection{I.i.d. input process} \label{sec_suberriid}
We will first assume that the input process $\mcl{X}$ is an i.i.d. $\mcl{B}(\alpha)$ process. With this assumption, the maximum achievable information rate, called the \emp{i.i.d.~capacity}, denoted $C_{iid}(\alpha, \p)$, gives a lower bound to the capacity.
\begin{prop}[Input symmetry] \label{prop_iidipsym}
$C_{iid}(\alpha, \p) = C_{iid}(\overline{\alpha}, \p)$. 
\end{prop}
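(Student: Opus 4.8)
The plan is to use the invariance of the channel law \eqref{eq_chnmod} under bitwise complementation of the input. Given any input process $\mcl{X}$, I would define its complement $\mcl{X}'$ by $X_i' = X_i \oplus 1$ for every $i$, including the boundary symbol $X_0$. If $\mcl{X}$ is i.i.d. $\mcl{B}(\alpha)$ with $X_0$ uniform on $\{0,1\}$ --- the input distribution defining $C_{iid}(\alpha,\p)$ --- then $\mcl{X}'$ is i.i.d. $\mcl{B}(\ba)$ with $X_0'$ again uniform, hence an admissible i.i.d. input for the channel $W(\p)$ of the type defining $C_{iid}(\ba,\p)$.

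The key step is to check that complementing the input merely complements the output. Because $\oplus$ is unchanged when both of its operands are complemented, $X_i' \oplus X_{i-1}' = X_i \oplus X_{i-1}$, so for the same realization of the state process $\mcl{Z}$ (which is independent of the input in either case),
\[ Y_i' = X_i' \oplus (X_i' \oplus X_{i-1}') \otimes Z_i = (X_i \oplus 1) \oplus (X_i \oplus X_{i-1}) \otimes Z_i = Y_i \oplus 1 . \]
Thus the output produced by $\mcl{X}'$ is exactly $\mcl{Y}' = \mcl{Y} \oplus 1$.

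Finally, since $x \mapsto x \oplus 1$ is a bijection on $\{0,1\}$, the induced maps $X_1^n \mapsto (X')_1^n$ and $Y_1^n \mapsto (Y')_1^n$ are invertible, so $I(X_1^n; Y_1^n) = I((X')_1^n; (Y')_1^n)$ for every $n$; dividing by $n$ and letting $n \to \infty$ yields $C_{iid}(\alpha,\p) = C_{iid}(\ba,\p)$, with no convergence subtlety since the identity already holds at each finite $n$. Equivalently, one can read this off \eqref{eq_mi}: $H(Y_1^n)$ is unchanged by complementing each $Y_i$, and $\mbb{P}\{X_i \ne X_{i-1}\} = \mbb{P}\{X_i' \ne X_{i-1}'\}$. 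I do not expect a real obstacle here; the only detail worth stating explicitly is that the boundary convention $X_0 \sim \mcl{B}(\tfrac12)$ is preserved under complementation, so $\mcl{X}'$ is genuinely a legitimate input of the same type.
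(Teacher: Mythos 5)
Your proposal is correct and is essentially the paper's own argument: complement the input, observe that $X_i'\oplus X_{i-1}'=X_i\oplus X_{i-1}$ so the complemented pair still satisfies \eqref{eq_chnmod} with output $Y_i'=Y_i\oplus 1$, and invoke invariance of mutual information under the bijection. Your explicit check that the boundary convention $X_0\sim\mcl{B}(\tfrac12)$ is preserved is a minor added detail the paper leaves implicit.
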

\begin{proof}
Let $X_1^n$ and $Y_1^n$ be the input and output respectively of the channel $W(\p)$. Define $\hat{X}_1^n = (X_1 \oplus 1, X_2 \oplus 1, \cdots, X_n \oplus 1)$ and $\hat{Y}_1^n = (Y_1 \oplus 1, Y_2 \oplus 1, \cdots, Y_n \oplus 1)$. When $X_1^n \sim \mcl{B}(\alpha)$, $\hat{X}_1^n \sim \mcl{B}(\overline{\alpha})$. Further since $X_1^n \leftrightarrow \hat{X}_1^n$ and $Y_1^n \leftrightarrow \hat{Y}_1^n$ are bijections, we have
\[
I_\p(X_1^n; Y_1^n) = I(\hat{X}_1^n; \hat{Y}_1^n).
\]
Clearly, $\hat{X}_1^n$ and $\hat{Y}_1^n$ also satisfy the relation in  \eqref{eq_chnmod} and consequently
\[
I(\hat{X}_1^n; \hat{Y}_1^n) = I_\p(\hat{X}_1^n; \hat{Y}_1^n).
\]
From this and the fact that
\[
C_{iid}(\alpha, \p) = \lim_{n \tends \infty}\frac{1}{n}I_\p(X_1^n; Y_1^n) \Big|_{\mcl{X} \sim \mcl{B}(\alpha)}
\]
we have the desired result.
\end{proof}
As a consequence of Propositions \ref{prop_chnsym} and \ref{prop_iidipsym}, we have
\[
C_{iid}(\alpha, \p) = C_{iid}(\alpha, \overline{\p}) = C_{iid}(\overline{\alpha}, \overline{\p}) = C_{iid}(\overline{\alpha}, \p).
\]

From Proposition \ref{prop_iidipsym} and the fact that $C_{iid}(\alpha, \p)$ is concave in $\alpha$, we immediately have the following.
\begin{cor}[Rate-maximizing i.i.d. distribution] \label{cor_sir}
\[\max_{\alpha \in [0, 1]}C_{iid}(\alpha, \p) = C_{iid}(\frac{1}{2}, \p){\ }\forall{\ }\p \in [0, 1].\]
\end{cor}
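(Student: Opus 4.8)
The plan is to lean entirely on the two structural facts already in hand for $C_{iid}$: the reflection symmetry $C_{iid}(\alpha,\p) = C_{iid}(\overline{\alpha},\p)$ supplied by Proposition \ref{prop_iidipsym}, and the concavity of $\alpha \mapsto C_{iid}(\alpha,\p)$ invoked in the statement. A concave function on $[0,1]$ that is invariant under $\alpha \mapsto 1-\alpha$ necessarily attains its maximum at the fixed point $\alpha = \tfrac12$ of that reflection, and the corollary is precisely this elementary fact specialized to $C_{iid}(\cdot,\p)$. So the ``proof'' is really just the one-line Jensen computation that makes this precise.

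Concretely, I would fix $\p \in [0,1]$ and an arbitrary $\alpha \in [0,1]$, note that $\tfrac12 = \tfrac12\alpha + \tfrac12\overline{\alpha}$, and apply concavity in the first argument:
\[
C_{iid}(\tfrac12,\p) \;=\; C_{iid}\!\left(\tfrac12\alpha + \tfrac12\overline{\alpha},\,\p\right) \;\ge\; \tfrac12\,C_{iid}(\alpha,\p) + \tfrac12\,C_{iid}(\overline{\alpha},\p).
\]
By Proposition \ref{prop_iidipsym}, $C_{iid}(\overline{\alpha},\p) = C_{iid}(\alpha,\p)$, so the right-hand side collapses to $C_{iid}(\alpha,\p)$, giving $C_{iid}(\tfrac12,\p) \ge C_{iid}(\alpha,\p)$. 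Since $\alpha$ was arbitrary and $\alpha = \tfrac12$ is itself in $[0,1]$, this is exactly $\max_{\alpha \in [0,1]} C_{iid}(\alpha,\p) = C_{iid}(\tfrac12,\p)$.

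There is no genuine obstacle inside this argument; its whole weight sits on the concavity of $\alpha \mapsto C_{iid}(\alpha,\p)$, which the surrounding text uses without proof. That concavity is the only point one might wish to shore up, and it is not entirely trivial: the i.i.d.\ parametrization $\mathrm{Ber}(\alpha)^{\otimes n}$ is not affine in $\alpha$, so concavity of $C_{iid}(\alpha,\p)$ does not follow directly from concavity of mutual information in the input law, nor does a plain time-sharing argument deliver it (time sharing between $\mathrm{Ber}(\alpha_1)^{\otimes n}$ and $\mathrm{Ber}(\alpha_2)^{\otimes n}$ produces a non-i.i.d.\ input). One can see from \eqref{eq_mi} that the conditional-entropy term contributes $2h_2(\p)\,\alpha\overline{\alpha}$ per symbol in the limit, which is itself concave, but controlling the output-entropy-rate term is the delicate part. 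For the purposes of this corollary, however, I would take the stated concavity as given and simply combine it with the reflection symmetry of Proposition \ref{prop_iidipsym} as above.
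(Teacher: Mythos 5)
Your argument is exactly the one the paper intends: it states the corollary as an immediate consequence of Proposition \ref{prop_iidipsym} together with the concavity of $C_{iid}(\alpha,\p)$ in $\alpha$, and your Jensen-plus-symmetry computation is just that implication written out explicitly (with concavity taken as given, as the paper also does). Correct, same approach.
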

When the binary input is i.i.d. with a uniform distribution (sometimes abbreviated as $i.u.d.$), the corresponding rate is called the \emp{symmetric information rate} (SIR) and is denoted $C_{iud}(\p)$. The SIR is of interest because it can be achieved by a random linear coset code \cite{gal_68_bok_infoth}.

We have from \eqref{eq_mi}
\begin{align}
C(\p) \ge C_{iud}(\p) &= \mcl{H}(\mcl{Y})\Big|_{\mcl{X} \sim \mcl{B}(1/2)} - \frac{h_2(\p)}{2} \label{eq_iudcap}
\end{align}

We can lower bound the SIR by disregarding the data-dependence of the noise in the channel. This gives a channel equivalent to a BSC with crossover probability $\p / 2$ so that 
\begin{equation} \label{eq_iudlb0}
C_{iud}(\p) \ge 1 - h_2\left(\frac{\p}{2}\right) \stackrel{\Delta}{=} L_0^{iud}(\p).
\end{equation}
Further lower bounds can be obtained by conditioning the entropy of the output as follows:
\begin{align}
C_{iud}(\p) &= \lim_{n\tends\infty}\frac{1}{n}\sum_{i=1}^nH(Y_i|Y_1^{i-1}) - \frac{h_2(\p)}{2} \notag \\
&\ge \lim_{n\tends\infty}\frac{1}{n}\sum_{i=1}^nH(Y_i|Y_1^{i-1}, X_{i-1}) - \frac{h_2(\p)}{2} \notag \\
&= h_2\left(\frac{1 - \p}{2}\right) - \frac{h_2(\p)}{2} \stackrel{\Delta}{=} L_1^{iud}(\p), \label{eq_iudlb1}
\end{align}
where we have used the fact that $Y_i$ depends only on $X_{i-1}$ and $X_i$, and given $X_{i-1}$, $Y_i$ is independent of $Y_1^{i-1}$. Continuing as above, we can obtain a tighter lower bound
\begin{align}
C_{iud}(\p) &\ge \lim_{n\tends\infty}\frac{1}{n}\sum_{i=1}^nH(Y_i|Y_1^{i-1},X_{i-2}) - \frac{h_2(\p)}{2} \notag \\
&= \left(\frac{1 + \p}{2}\right)h_2\left(\frac{1 + \p^2}{2(1 + \p)}\right) \notag \\
&\hspace*{2mm} + \left(\frac{1 - \p}{2}\right)h_2\left(\frac{1 - \p}{2}\right) - \frac{h_2(\p)}{2}  \stackrel{\Delta}{=} L_2^{iud}(\p). \label{eq_iudlb2}
\end{align}
A straightforward upper bound for the SIR, implied by \eqref{eq_iudcap} is  
\begin{equation} \label{eq_iudub0}
C_{iud}(\p) \le 1 - \frac{h_2(\p)}{2} \stackrel{\Delta}{=} U_0^{iud}(\p),
\end{equation} 
which follows because the entropy rate for a binary process $\mcl{H}(\mcl{Y}) \le 1$. Note that this bound is achieved when $\mcl{Y}$ is the i.i.d. $\mcl{B}(1/2)$ process. Again, starting from \eqref{eq_iudcap}, we can obtain upper bounds for the SIR by removing conditioning from the entropy of the output, as shown below:
\begin{align}
C_{iud}(\p) &\le \lim_{n\tends\infty}\frac{1}{n}\sum_{i=1}^nH(Y_i|Y_{i-1}) - \frac{h_2(\p)}{2} \notag \\
&= h_2\left(\frac{1-\p\overline{\p}}{2}\right) - \frac{h_2(\p)}{2} \stackrel{\Delta}{=} U_1^{iud}(\p). \label{eq_iudub1}
\end{align}
As with the lower bounds, we can find a tighter upper bound for the entropy rate $\mcl{H}(\mcl{Y})$ as follows
\begin{align}
C_{iud}(\p) &\le \lim_{n\tends\infty}\frac{1}{n}\sum_{i=1}^nH(Y_i|Y_{i-2}^{i-1}) - \frac{h_2(\p)}{2} \notag \\
&= \frac{1 - \p\overline{\p}}{2}h_2\left(\frac{1}{2(1 - \p\overline{\p})}\right) \notag \\
&\hspace*{5mm}+ \frac{1 + \p\overline{\p}}{2}h_2\left(\frac{1}{2(1 + \p\overline{\p})}\right) - \frac{h_2(\p)}{2} \stackrel{\Delta}{=} U_2^{iud}(\p). \label{eq_iudub2}
\end{align}

Fig.~\ref{fig_capbounds} plots the lower and the upper bounds for SIR discussed above.
\begin{figure}[ht]
\centering
\includegraphics[width=0.95\linewidth]{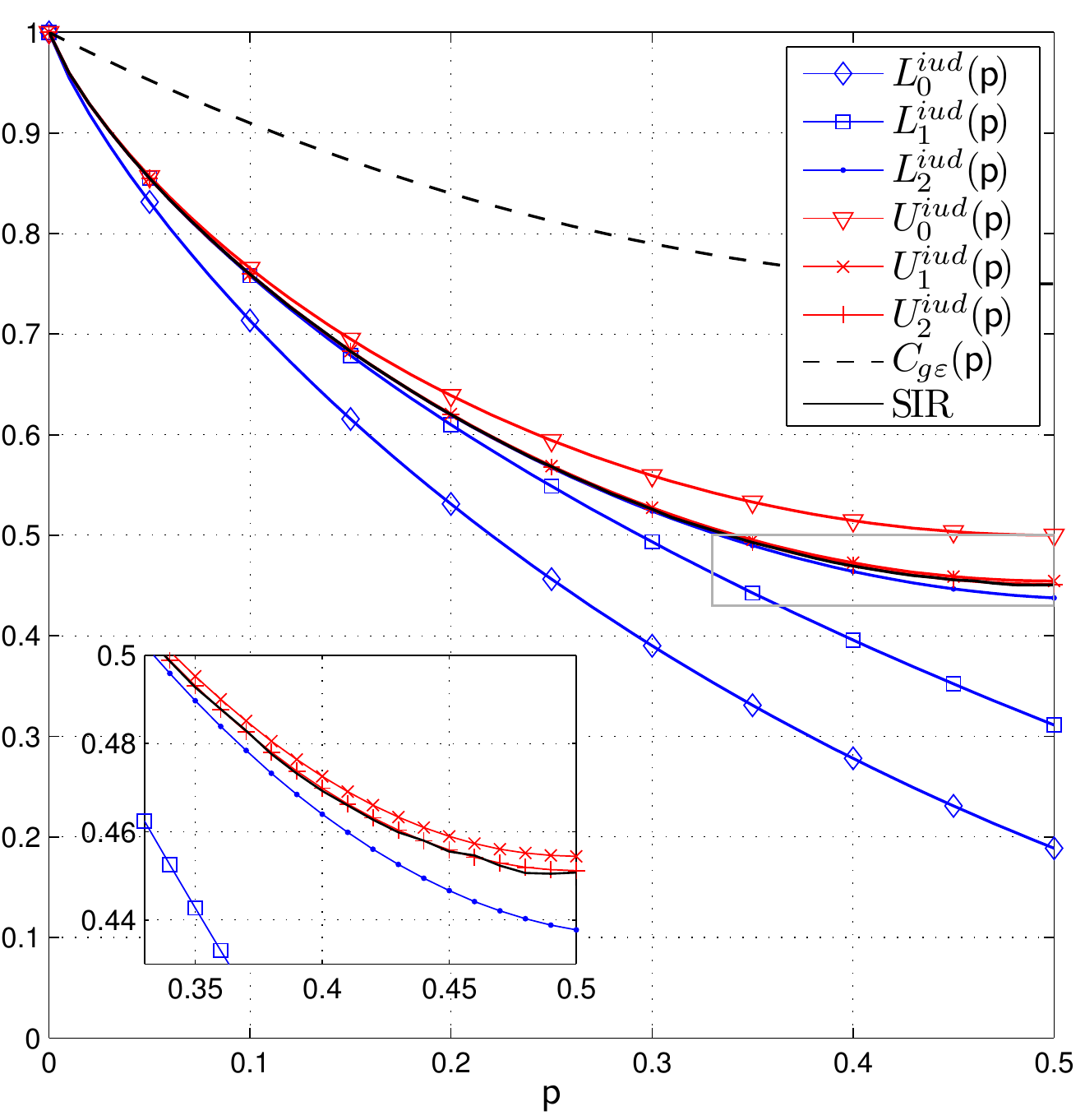}
\caption{Lower and upper bounds for SIR given in \eqref{eq_iudlb0}, \eqref{eq_iudlb1}, \eqref{eq_iudlb2}, \eqref{eq_iudub0}, \eqref{eq_iudub1} and \eqref{eq_iudub2}, and the upper bound for capacity in \eqref{eq_capuber}. The SIR from simulations is also shown.}
\label{fig_capbounds}
\end{figure}
Note that the tighter lower and upper bounds --- $L_2^{iud}(\p)$ in \eqref{eq_iudlb2} and $U_2^{iud}(\p)$ in \eqref{eq_iudub2} --- almost coincide for $\p \le 0.3$ (and from symmetry, for $\p \ge 0.7$). In this range, therefore, where the bounds themselves are greater than 0.5, they approximate the SIR fairly accurately. The process $\mcl{Y}$ is ergodic, as all channel states can be reached within a finite number of steps at any time with strictly positive probability \cite[\S 5.3]{ric_08_bok_mct}, and it converges to a stationary process. As a result, from the Asymptotic Equipartition Property \cite[\S 3.1]{cov_06_bok_infoth}, we have the entropy rate
\[
\mcl{H}(\mcl{Y}) = \lim_{n\tends\infty}\frac{1}{n}H(Y_1^n) = -\lim_{n\tends\infty}\frac{1}{n}\log_2\mbb{P}\{Y_1^n\},
\] 
which can be numerically evaluated through a forward pass of the BCJR algorithm \cite{bah_74_tit_bcjr}, \cite[\S 5.3]{ric_08_bok_mct}. By using long enough sequences $X_1^n$ and $Y_1^n$ in the computation, the SIR can be obtained with an accuracy of $O(\frac{1}{\sqrt{n}})$. This is shown as the ``SIR'' curve in Fig.~\ref{fig_capbounds}, from which we conclude that the upper bound $U_2^{iud}(\p)$ in \eqref{eq_iudub2} is a good approximation for the SIR.

\subsection{First-order Markov Input process} \label{sec_suberrbss}
To explore the loss in the achievable rate due to independent uniformly distributed input, we let the source have memory. We consider a first-order binary Markov input process $\mcl{X}$. Taking a cue from the input symmetry of the Bernoulli state channel, we consider a \emp{symmetric} binary Markov process $\mcl{X}$
%described as $\mcl{X} \in \{0,1\}^*$ 
with $\mbb{P}\{X_i = 1 | X_{i-1} = 0\} = \mbb{P}\{X_i = 0 | X_{i - 1} = 1\} = \beta$. We denote this by $\mcl{X} \sim \mcl{M}_1^{(2)}(\beta)$, meaning that $\mcl{X}$ is a binary (alphabet of size $2$) Markov source with memory $1$ and transition parameter $\beta$.

Starting from  \eqref{eq_mi}, we can arrive at lower and upper bounds for the \emph{Symmetric Markov-$1$ Rate} (M1R),  $C_{M1}(\beta, \p)$, which we define as the maximum rate of information transfer when $\mcl{X} \sim \mcl{M}_1^{(2)}(\beta)$. The lower bounds analogous to those for the SIR are
\begin{align}
C_{M1}(\beta, \p) &\ge 1 - h_2\left(\frac{\p}{2}\right) \stackrel{\Delta}{=} L_0^{M1}(\p), \notag
\end{align}
which is the same as $L_0^{iud}(\p)$ in \eqref{eq_iudlb0},
\begin{align}
C_{M1}(\beta, \p) &= \mcl{H}(\mcl{Y})\Big|_{\mcl{X} \sim \mcl{M}_1^{(2)}(\beta)} - \beta h_2(\p) \notag \\
&\ge \lim_{n\tends\infty}\frac{1}{n}\sum_{i=1}^nH(Y_i|Y_1^{i-1}, X_{i-1}) - \beta h_2(\p) \notag \\
&= h_2(\beta\overline{\p}) - \beta h_2(\p) \stackrel{\Delta}{=} L_1^{M1}(\beta, \p), \notag % \label{eq_bsslb1}
\end{align}
and
\begin{align}
C_{M1}(\beta, \p) &\ge \lim_{n\tends\infty}\frac{1}{n}\sum_{i=1}^nH(Y_i|Y_1^{i-1}, X_{i-2}) - \beta h_2(\p) \notag \\
&= \left(1 - \beta\overline{\p}\right)h_2\left(\frac{\beta\left(1 - \beta(1 - \p^2)\right)}{1 - \beta\overline{\p}}\right) \notag \\
&\hspace*{2mm}+ \beta\overline{\p}h_2\left(\beta\overline{\p}\right) - \beta h_2(\p) \stackrel{\Delta}{=} L_2^{M1}(\beta, \p). \label{eq_bsslb2}
\end{align}
The trivial upper bound analogous to $U_0^{iud}(\p)$ in \eqref{eq_iudub0} is
\begin{align}
C_{M1}(\beta, \p) \le 1 - \beta h_2(\p) \stackrel{\Delta}{=} U_0^{M1}(\beta, \p). \notag
\end{align}
The upper bounds corresponding to $U_1^{iud}$ and $U_2^{iud}$ are
\begin{align}
C_{M1}(\beta, \p) &\le h_2\left(1 - \beta + 2\beta^2\p\overline{\p}\right) - \beta h_2(\p) \stackrel{\Delta}{=} U_1^{M1}(\beta, \p) \notag % \label{eq_bssub1}
\end{align}
and
\begin{align}
C_{M1}(\beta&, \p) \le \left\{\beta^2\bb(1 - \p\overline{\p}) + \beta^2(3 - \beta)\p\overline{\p} + (1 + \beta)\bb^2\right\} \notag\\
&\times h_2\left(\frac{\beta^2\bb(1 - \p\overline{\p}) + \beta^3\p\overline{\p} + \beta\bb^2}{\beta^2\bb(1 - \p\overline{\p}) + \beta^2(3 - \beta)\p\overline{\p} + (1 + \beta)\bb^2}\right) \notag \\
&{\ }{\ }{\ }{\ } + \left\{2\beta^2\bb(1 - \p\overline{\p}) + \beta^3(1 - 2\p\overline{\p}) + \beta\bb^2\right\} \notag\\
&\times h_2\left(\frac{\beta^2\bb(1 - \p\overline{\p}) + \beta^3\p\overline{\p} + \beta\bb^2}{2\beta^2\bb(1 - \p\overline{\p}) + \beta^3(1 - 2\p\overline{\p}) + \beta\bb^2}\right) \notag \\
&{\ }{\ }{\ }{\ } - \beta h_2(\p) \stackrel{\Delta}{=} U_2^{M1}(\beta, \p). \label{eq_bssub2}
\end{align}
Fig.~\ref{fig_bsscap} shows the contours of the bounds for $C_{M1}$ in \eqref{eq_bsslb2} and \eqref{eq_bssub2}. (Only the tighter bounds, $L_2^{M1}(\beta, \p)$ and $U_2^{M1}(\beta, \p)$ are shown.)
\begin{figure}[ht]
\centering
\includegraphics[width=0.95\linewidth]{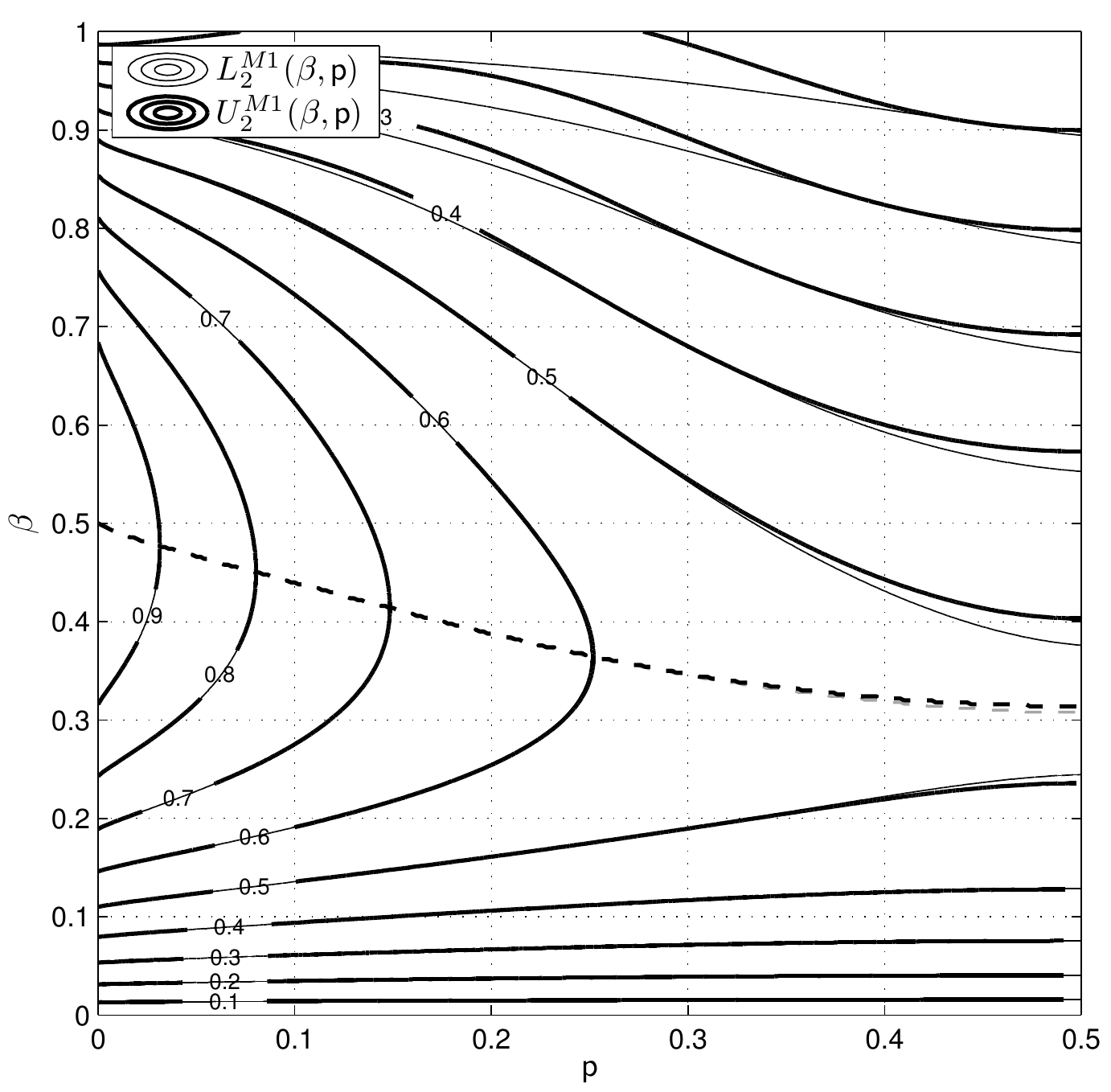}
\caption{Contours of lower (thin curves) and upper (thick curves) bounds for M1R given in \eqref{eq_bsslb2} and \eqref{eq_bssub2}.}
\label{fig_bsscap}
\end{figure}
As was the case for i.u.d. input, the tighter bounds \eqref{eq_bsslb2} and \eqref{eq_bssub2} have almost coinciding contours for a wide range of parameters $(\beta, \p)$. Unlike the case of i.i.d. inputs, the rate-maximizing input parameter $\beta^*(\p)$ is not easily obtained. A close estimate can be obtained by maximizing the bounds obtained above. These are shown (dotted lines) in Fig.~\ref{fig_bsscap}. Since the optimal $\beta$ values for the tighter lower and upper bounds $L_2^{M1}(\beta, \p)$ and $U_2^{M1}(\beta, \p)$ almost coincide, we can say that $\beta^*(\p)$ is monotonically decreasing in $\p$ in the interval $[0,\frac{1}{2}]$ with $\beta^*(0) = \frac{1}{2}$ and $\beta^*(1/2) \approx 0.31$.

Fig.~\ref{fig_iidbsscap} compares the SIR (solid line representing $L_2^{iud}(\p)$ in \eqref{eq_iudlb2}, and dashed line representing $U_2^{iud}(\p)$ in \eqref{eq_iudub2}) and the M1R (solid line for $L_2^{M1}(\beta^*_{L_2}(\p), \p)$ in \eqref{eq_bsslb2}, dashed line for $U_2^{M1}(\beta^*_{U_2}(\p), \p)$ in \eqref{eq_bssub2}) over the range of $\p$ values.
\begin{figure}[!ht]
\centering
\includegraphics[width=0.95\linewidth]{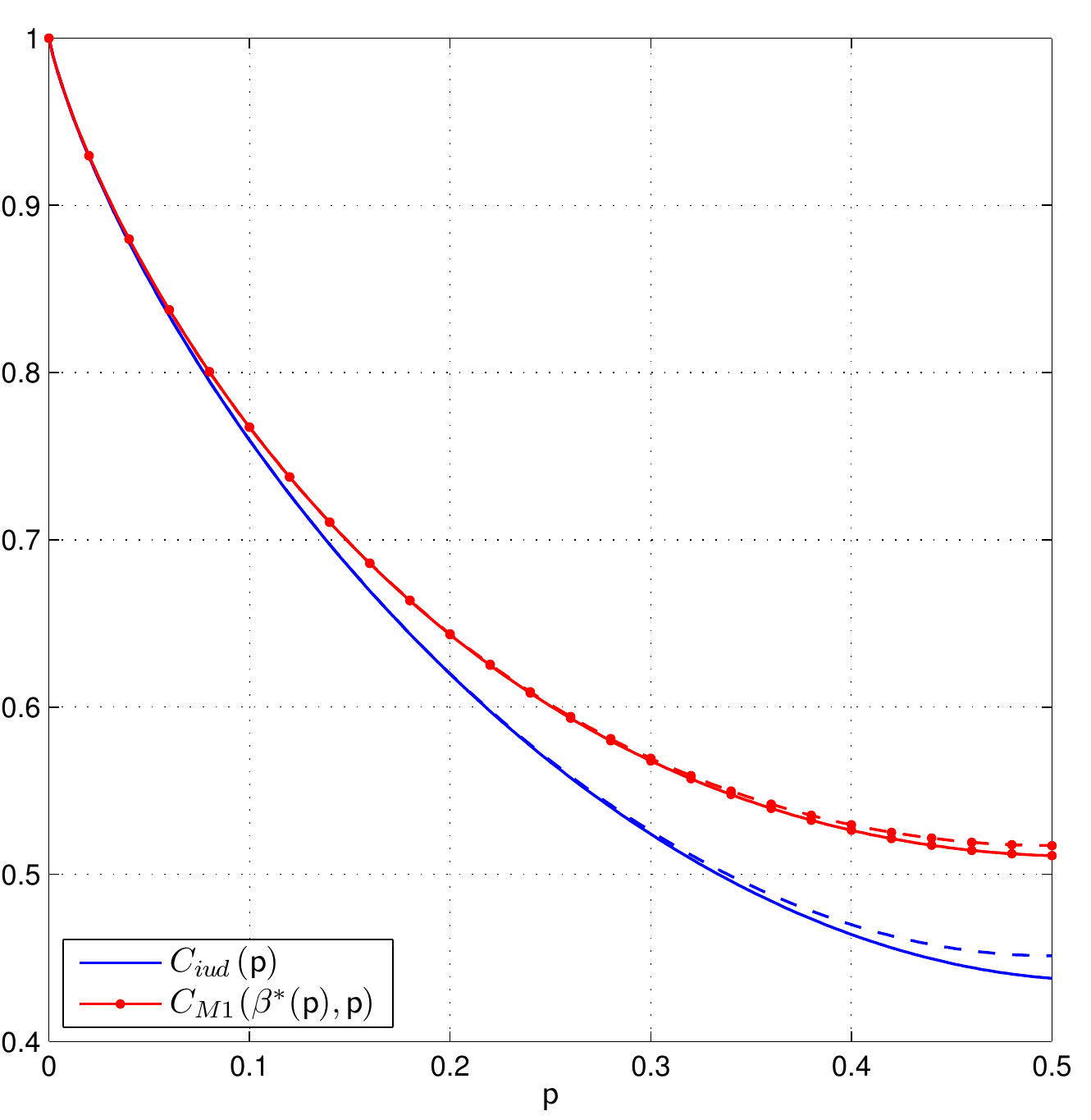}
\caption{Comparison between the tighter lower and upper bounds for SIR and M1R.}
\label{fig_iidbsscap}
\end{figure}
It is clear that considerable gains in reliable information transfer rate are possible by using an input with memory. In particular, note that whereas there is a range of $\p$ values for which the SIR is smaller than $0.5$, the M1R is strictly larger than $0.5{\ }\forall{\ }\p$. It is clear that for a sequence of input Markov processes of increasing orders, the achievable rates are non-decreasing. The algorithm suggested in \cite{kav_01_glb_marcap, von_08_tit_gbaa} can be employed to optimize the input Markov process to maximize the rate. %arn_06_tit_irchnmem, 

\section{Binary Markov State Channel} \label{sec_insdel}
The channel space for the binary Markov state channel defined in Section \ref{ssec_iderr} is $(\p_\msi, \p_\msd) \in [0,1]^2$. Note that the channel space is ordered, i.e., the first parameter is the $0 \rightarrow 1$ transition (insertion) probability and the second the $1 \rightarrow 0$ transition (deletion) probability. As in the case of the Bernoulli state channel, the capacity of the binary Markov state channel, denoted $C(\p_\msi, \p_\msd)$, is given as 
\begin{align}
C(\p_\msi, \p_\msd) = \lim_{n\tends\infty} \sup_{\mbb{P}\{X_1^n\}} \frac{1}{n} I_{(\p_\msi, \p_\msd)}(X_1^n; Y_1^n). \notag
\end{align}
We shall assume that the channel has converged to the stationary distribution. This means that when $\p_\msi$ (resp., $\p_\msd$) is zero, the channel is the noise-free channel (resp., noise-free channel with a delay) and hence $C(0, \p_\msd) = 1$ (resp., $C(\p_\msi, 0) = 1$). We first establish the following symmetry property of the binary Markov state channel.
\begin{prop}[Channel symmetry] \label{prop_1idsym}
$C(\p_\msi, \p_\msd) = C(\p_\msd, \p_\msi)$.
\end{prop}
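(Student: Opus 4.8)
The plan is to mimic the proof of Proposition~\ref{prop_chnsym}, combining two structural rewritings of \eqref{eq_chnmod} with a time-reversal argument. First, \textbf{complementing the state process swaps the two transition parameters}: if $\overline{Z}_i = Z_i \oplus 1$ and $\mcl{Z} \sim \mcl{M}_1^{(2)}(\p_\msi, \p_\msd)$, then $\mbb{P}\{\overline{Z}_i = 1 \mid \overline{Z}_{i-1} = 0\} = \mbb{P}\{Z_i = 0 \mid Z_{i-1} = 1\} = \p_\msd$ and likewise $\mbb{P}\{\overline{Z}_i = 0 \mid \overline{Z}_{i-1} = 1\} = \p_\msi$, so $\overline{\mcl{Z}} \sim \mcl{M}_1^{(2)}(\p_\msd, \p_\msi)$. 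Second, writing $S_i = X_{i-1}$, the channel \eqref{eq_chnmod} is equivalent to
\[
Y_i = S_i \oplus (S_i \oplus S_{i+1}) \otimes \overline{Z}_i,
\]
since $Y_i = X_i = S_{i+1}$ when $\overline{Z}_i = 1$ and $Y_i = X_{i-1} = S_i$ when $\overline{Z}_i = 0$. This is exactly the ``granular-media'' form of Section~\ref{ssec_tdmr}, with input $\mcl{S}$ and state $\overline{\mcl{Z}}$, which under a time reversal becomes a channel of the form \eqref{eq_chnmod} again. The key point is that time reversal does not undo the parameter swap: every stationary two-state Markov chain is reversible (detailed balance with the stationary law $\pi \propto (\p_\msd, \p_\msi)$ holds automatically), so the time-reversed copy of $\overline{\mcl{Z}}$ is still an $\mcl{M}_1^{(2)}(\p_\msd, \p_\msi)$ process.

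Putting these together, I would fix $n$ and set $\tilde{X}_i = X_{n-i}$, $\tilde{Y}_i = Y_{n+1-i}$, $\tilde{Z}_i = \overline{Z}_{n+1-i}$ for $i = 1, \dots, n$. Substituting $i \mapsto n+1-i$ into the granular-media identity shows that $\tilde{Y}_i = \tilde{X}_i \oplus (\tilde{X}_i \oplus \tilde{X}_{i-1}) \otimes \tilde{Z}_i$, i.e.\ that $(\tilde{X}_1^n, \tilde{Y}_1^n)$ is an input/output block of $W(\p_\msd, \p_\msi)$, with $\tilde{X}_0 = X_n$ playing the role of the pre-existing island state. Since $\tilde{X}_1^n = (X_{n-1}, X_{n-2}, \dots, X_0)$ is obtained from $X_1^n$ by a one-step shift and a reversal --- bijections apart from the boundary symbols $X_0$ and $X_n$ --- and $\tilde{Y}_1^n$ is merely a reindexing of $Y_1^n$, the same manipulation as in Proposition~\ref{prop_chnsym} gives
\[
I_{(\p_\msi, \p_\msd)}\!\left(X_1^n; Y_1^n\right) = I_{(\p_\msd, \p_\msi)}\!\left(\tilde{X}_1^n; \tilde{Y}_1^n\right) + O(1),
\]
the $O(1)$ absorbing terms of size at most $H(X_0) + H(X_n) \le 2$; one may equivalently proceed term by term through the chain rule, $I_{(\p_\msi,\p_\msd)}(X_i; Y_1^n \mid X_1^{i-1}) = I_{(\p_\msd,\p_\msi)}(\tilde{X}_{n-i}; \tilde{Y}_1^n \mid \tilde{X}_{n+1-i}^n)$ for all but a vanishing fraction of indices $i$, exactly as in that proof.

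Finally I would divide by $n$, take the supremum over $\mbb{P}\{X_1^n\}$, and let $n \to \infty$: the map $X_1^n \mapsto \tilde{X}_1^n$ is a bijection on input distributions up to the two negligible boundary symbols, so the suprema coincide, the $O(1/n)$ terms disappear, and $C(\p_\msi, \p_\msd) = C(\p_\msd, \p_\msi)$. As in Proposition~\ref{prop_chnsym}, the identity also holds for every fixed input distribution, and the special case $\p_\msd = 1 - \p_\msi$ --- for which $\mcl{Z}$ is i.i.d.\ $\mcl{B}(\p_\msi)$, so $W(\p_\msi, 1-\p_\msi)$ is the Bernoulli state channel $W(\p_\msi)$ --- recovers Proposition~\ref{prop_chnsym}. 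The conceptual crux is recognizing that complementation alone yields the ``granular'' variant of the channel rather than the form \eqref{eq_chnmod}, and that a further time reversal restores \eqref{eq_chnmod} while (by reversibility of the two-state chain) leaving the now-swapped parameters intact; once this composition is identified, the remaining boundary bookkeeping --- the shift $X_i \mapsto X_{i-1}$, the pre-existing bit $X_0$, and the reversal endpoints --- is routine and identical to the earlier proof.
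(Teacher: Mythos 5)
Your proof is correct and follows essentially the same route as the paper's: a time-reversal of the stationary state chain combined with the shift $S_i = X_{i-1}$ and the chain-rule decomposition, with boundary terms vanishing after normalizing by $n$. The paper compresses your two explicit steps (complementation swaps $\p_\msi \leftrightarrow \p_\msd$; reversal of the reversible two-state chain preserves the law) into the single observation that under time reversal an insertion transition becomes a deletion transition, but the underlying symmetry and the bookkeeping are identical.
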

\begin{proof}
We know that if $Z_1^n = (Z_1, Z_2, \cdots, Z_n)$ is a Markov process, then so is $(Z_n, Z_{n - 1}, \cdots, Z_1)$ \cite[\S 16-4]{pap_91_bok_prob}. Furthermore, since the channel is assumed to have converged to the stationary distribution, the conditional distributions $\mbb{P}\{Z_i | Z_{i - 1}\}$ and $\mbb{P}\{Z_i | Z_{i + 1}\}$ are identical.  However, note that whereas a transition $Z_{i - 1} = 0 \rightarrow Z_i = 1$ is an insertion, the transition $Z_{i + 1} = 0 \rightarrow Z_i = 1$ is a deletion. This implies that $I_{(\p_\msi, \p_\msd)}(X_i; Y_1^n | X_1^{i - 1}) = I_{(\p_\msd, \p_\msi)}(S_{n - i + 1}; Y_1^n | S_{n - i + 2}^n)$, where $S_i = X_{i - 1}$, for all but a vanishing fraction of indices $i$, as $n\tends\infty$. Therefore,
\begin{align}
C(\p_\msi, \p_\msd) &= \lim_{n\tends\infty}\sup_{\mbb{P}\{X_1^n\}}\frac{1}{n}I_{(\p_\msi, \p_\msd)}(X_1^n; Y_1^n) \notag \\
&= \lim_{n\tends\infty}\sup_{\mbb{P}\{X_1^n\}}\frac{1}{n}\sum_{i=1}^nI_{(\p_\msi, \p_\msd)}(X_i; Y_1^n | X_1^{i - 1}) \notag \\
&= \lim_{n\tends\infty}\sup_{\mbb{P}\{X_1^n\}}\frac{1}{n}\sum_{i=1}^nI_{(\p_\msd, \p_\msi)}(S_{n - i + 1}; Y_1^n | S_{n - i + 2}^n) \notag \\
&= \lim_{n\tends\infty}\sup_{\mbb{P}\{X_1^n\}}\frac{1}{n}I_{(\p_\msd, \p_\msi)}(S_1^n; Y_1^n) \notag \\
&= \lim_{n\tends\infty}\sup_{\mbb{P}\{X_1^n\}}\frac{1}{n}I_{(\p_\msd, \p_\msi)}(X_1^{n - 1}; Y_1^n) \notag \\
&= \lim_{n\tends\infty}\sup_{\mbb{P}\{X_1^n\}}\frac{1}{n}I_{(\p_\msd, \p_\msi)}(X_1^n; Y_1^n) = C(\p_\msd, \p_\msi). \notag
\end{align}
The channel space can therefore be reduced to the region $\p_\msi \in [0, 1], \p_\msd \in [0, \p_\msi]$. As was the case for the $W(\p)$ channel, we have the same symmetry for any fixed input distribution.
\end{proof}
As a consequence of Proposition~\ref{prop_1idsym}, we can assume an unordered pair $\{\p_\msi, \p_\msd\}$ parameterizing the channel space.

As for the Bernoulli state channel, we can define the genie-erasure capacity $C_{g\e}(\p_\msi, \p_\msd)$ of the binary Markov state channel $W(\p_\msi, \p_\msd)$. In this case, the resulting channel is a correlated erasure channel with an average erasure probability $\mbb{P}\{Z_{i - 1} = 1, Z_i = 0\} = \frac{\p_\msi\p_\msd}{\p_\msi + \p_\msd}$, so that
\begin{equation} \label{eq_1idcapub}
C(\p_\msi, \p_\msd) \le 1 - \frac{\p_\msi\p_\msd}{\p_\msi + \p_\msd} \stackrel{\Delta}{=} C_{g\e}(\p_\msi, \p_\msd).
\end{equation}

\begin{figure}[!ht]
\centering
\includegraphics[width=.95\linewidth]{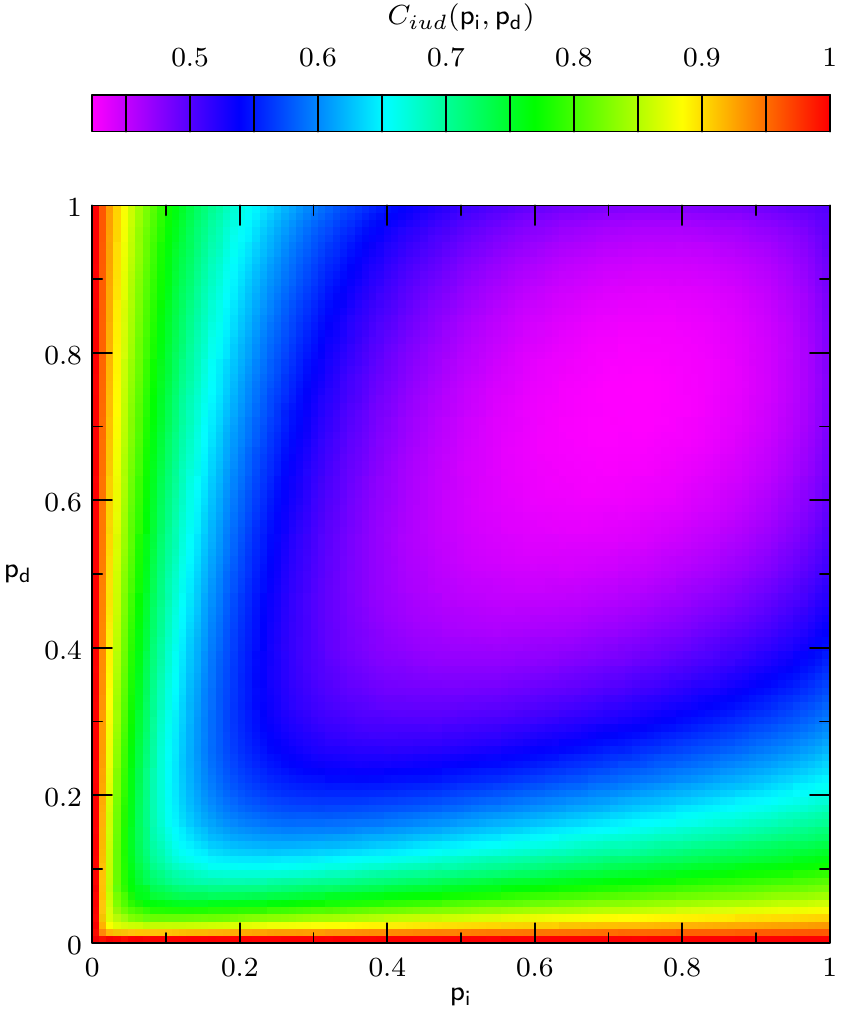}
\caption{The SIR $C_{iud}(\p_{\msi}, \p_{\msd})$  for the $W(\p_\msi, \p_\msd)$ channel.}
\label{fig_siraid}
\end{figure}
As a result of the memory in the $\mcl{Z}$ process, it is considerably harder than it was for the $W(\p)$ channel to obtain closed-form expressions for lower bounds on the capacity of $W(\p_\msi, \p_\msd)$ by computing information rates for known input distributions. However, the $W(\p_\msi, \p_\msd)$ channel is still an ergodic channel, and the SIR $C_{iud}(\p_{\msf{id}}, \p_{\msf{id}})$ can be obtained numerically. Fig.~\ref{fig_siraid} shows the contours of the SIR  for the $W(\p_\msi, \p_\msd)$ channel. Note that the symmetry proved in Proposition \ref{prop_1idsym} is evident.

\begin{figure}[!ht]
\centering
\includegraphics[width=.95\linewidth]{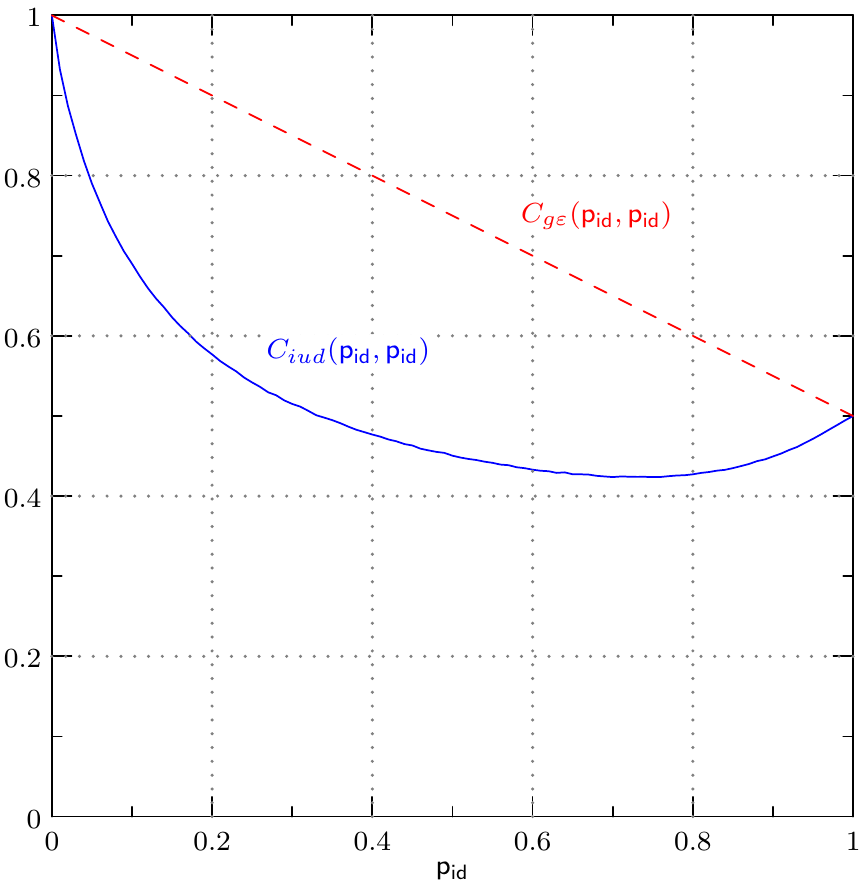}
\caption{The SIR $C_{iud}(\p_{\msf{id}}, \p_{\msf{id}})$ and the genie-erasure capacity $C_{g\e}(\p_{\msf{id}}, \p_{\msf{id}})$ for the symmetric binary Markov state channel $W(\p_{\msf{id}}, \p_{\msf{id}})$.}
\label{fig_insdelsir}
\end{figure}
In Fig.~\ref{fig_insdelsir}, we show the SIR for the case $\p_\msi = \p_\msd = \p_{\msf{id}}$, which we call the \emp{symmetric} binary Markov state channel $W(\p_{\msf{id}}, \p_{\msf{id}})$. The values of the SIR when $\p_{\msf{id}} = 0$ and $\p_{\msf{id}} = 1$ are easily explained. When $\p_{\msf{id}} = 0$, the channel is noiseless. When $\p_\msf{id} = 1$, the channel deterministically flips between the identity and the delayed channel so that every odd bit is repeated twice, and every even bit is lost, and the maximum achievable information transfer rate is $\frac{1}{2}$ bit per channel use. Also shown is the genie-erasure capacity in \eqref{eq_1idcapub}, which in this case becomes,
\[
C(\p_\msf{id}, \p_\msf{id}) \le C_{g\e}(\p_\msf{id}, \p_\msf{id}) = 1 - \frac{\p_\msf{id}}{2}.
\]
Interestingly enough, when $\p_\msf{id} = 1$, the SIR satisfies $C_{iud}(1, 1) = C_{g\e}(1, 1)$, so that $C(1, 1) = \frac{1}{2}$. We also include in Fig.~\ref{fig_oneinfsir} the SIR obtained for the channel $W(\p_\msi, 1)$, as well as the corresponding genie-erasure capacity, $C_{g\e}(\p_\msi, 1) = \frac{1}{1 + \p_\msi}$.
\begin{figure}[!ht]
\centering
\includegraphics[width=.95\linewidth]{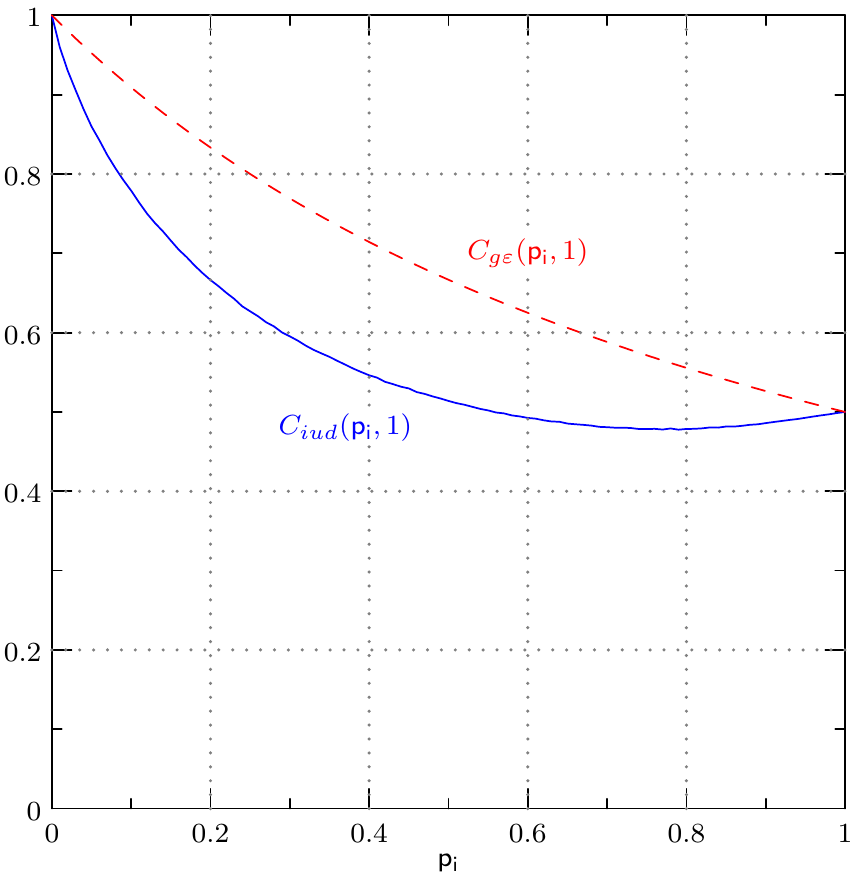}
\caption{The SIR $C_{iud}(\p_\msi, 1)$ and the genie-erasure capacity $C_{g\e}(\p_\msi, 1)$ for the channel $W(\p_\msi, 1)$ with the $\mcl{Z}$ process satisfying the $(1, \infty)$ constraint.}
\label{fig_oneinfsir}
\end{figure}

\section{$K$-ary Markov State Channel} \label{sec_kid}
We now consider a generalization of the binary Markov state channel as described in \eqref{eq_altchnmod}. Here, we allow $Z_i \in \mds{Z} = \{0, 1, \cdots, K - 1\}$ and let $\mcl{Z}$ be a first-order Markov process whose transition probabilities satisfy 
\begin{align}
\mbb{P}\{Z_i = z | Z_{i - 1} = z - 1\} &= \p_\msi, \notag \\
\mbb{P}\{Z_i = z | Z_{i - 1} = z + 1\} &= \p_\msd, \notag \\
\mbb{P}\{Z_i = z | Z_{i - 1} = z\} &= 1 - \p_\msi - \p_\msd \notag
\end{align}
for $z \in \{1, 2, \cdots, K - 2\}$. Further, we have
\[
\mbb{P}\{Z_i = 1 | Z_{i - 1} = 0\} = \p_\msi = 1 - \mbb{P}\{Z_i = 0 | Z_{i - 1} = 0\}
\]
and
\begin{align}
\mbb{P}\{Z_i = K - 2 &| Z_{i - 1} = K - 1\} = \p_\msd \notag \\
&= 1 - \mbb{P}\{Z_i = K - 1 | Z_{i - 1} = K - 1\}. \notag
\end{align}
We denote this by $\mcl{Z} \sim \mcl{M}^{(K)}_1(\p_\msi, \p_\msd)$. Note that when $K = 2$, this model gives the binary Markov state channel considered earlier. We will hence be interested in the $K$-ary Markov state channel where $K > 2$, which we denote by $W(\p_\msi, \p_\msd, K)$. Note that this channel now generalizes the binary Markov state channel in the sense that it allows up to $(K - 1)$ consecutive insertions or deletions.

We further assume that the parameters $\p_\msi, \p_\msd$ are chosen such that the process $\mcl{Z}$ is aperiodic and irreducible so that $\mcl{Y}$ is ergodic. The channel space is given by  $\p_\msi \in [0, 1], \p_\msd \in [0, 1 - \p_\msi]$. The channel symmetry argument of Proposition \ref{prop_1idsym} holds in this case also, so that
\[
C(\p_\msi, \p_\msd, K) = C(\p_\msd, \p_\msi, K)
\]
where $C(\p_\msi, \p_\msd, K)$ is the capacity of $K$-ary Markov state channel $W(\p_\msi, \p_\msd, K)$. Hence, the channel space can be further reduced to $\p_\msi \in [0, 1], \p_\msd \in [0, \min\{\p_\msi, 1 - \p_\msi\}]$. As was the case for the binary Markov state channel, we have $C(0, \p_\msd, K) = C(\p_\msi, 0, K) = 1{\ }\forall{\ }\p_\msi, \p_\msd \in [0, 1]$.

The genie-erasure capacity of the $W(\p_\msi, \p_\msd, K)$ channel is given by the capacity of a correlated erasure channel with an average erasure rate
\begin{align}
\mbb{P}&\{Z_i = Z_{i - 1} - 1\} = \sum_{z = 1}^{K - 1} \mbb{P}\{Z_{i - 1} = z, Z_i = z - 1\} \notag \\
&= \sum_{z = 1}^{K - 1} \mbb{P}\{Z_{i - 1} = z\}\mbb{P}\{Z_i = z - 1 | Z_{i - 1} = z\} = \sum_{z = 1}^{K - 1} \pi_z\p_\msd \notag \\
&= \p_\msd\left(1 - \left(\frac{\p_\msd}{\p_\msi}\right)^{K - 1}\left(\frac{1 - \left(\frac{\p_\msd}{\p_\msi}\right)}{1 - \left(\frac{\p_\msd}{\p_\msi}\right)^K}\right)\right) = \p_{\e} \notag
\end{align}
where $\pi_z = \left(\frac{\p_\msd}{\p_\msi}\right)^{K - 1 - z}\left(\frac{1 - \left(\frac{\p_\msd}{\p_\msi}\right)}{1 - \left(\frac{\p_\msd}{\p_\msi}\right)^K}\right)$ is the steady state probability of $Z_i = z$, so that
\[
C(\p_\msi, \p_\msd, K) \le 1 - \p_{\e} \stackrel{\Delta}{=} C_{g\e}(\p_\msi, \p_\msd, K).
\]
For the \emp{symmetric} $K$-ary Markov state channel, we have
\[
C(\p_\msf{id}, \p_\msf{id}, K) \le 1 - \p_\msf{id}\frac{K - 1}{K} = C_{g\e}(\p_\msf{id}, \p_\msf{id}, K)
\]
for $\p_\msf{id} \in [0, \frac{1}{2}]$ because in this case, $\pi_z = \frac{1}{K}{\ }\forall{\ }z\in\{0, 1, \cdots, K - 1\}$. Note that $C_{g\e}(\p_\msf{id}, \p_\msf{id}, K)$ reduces to the capacity of a BEC with erasure probability $\p_\msf{id}$, $C_{BEC}(\p_\msf{id}) = 1 - \p_\msf{id}$, when $K \tends \infty$.

The SIR of the $W(\p_\msi, \p_\msd, K)$ channel can be obtained numerically as in the case of the $W(\p_\msi, \p_\msd)$ channel. However, the computational complexity of the BCJR algorithm \cite{sor_07_tit_sir} used to estimate the SIR increases roughly exponentially in the size $K$ of the alphabet of the process $\mcl{Z}$, and hence the evaluation of the SIR for  $K>2$ may require considerable computing resources. We carried out the calculation for the case when $K = 3$, and the estimated SIR is shown in Fig.~\ref{fig_2insdelsir}.
\begin{figure}[!ht]
\centering
\includegraphics[width=.95\linewidth]{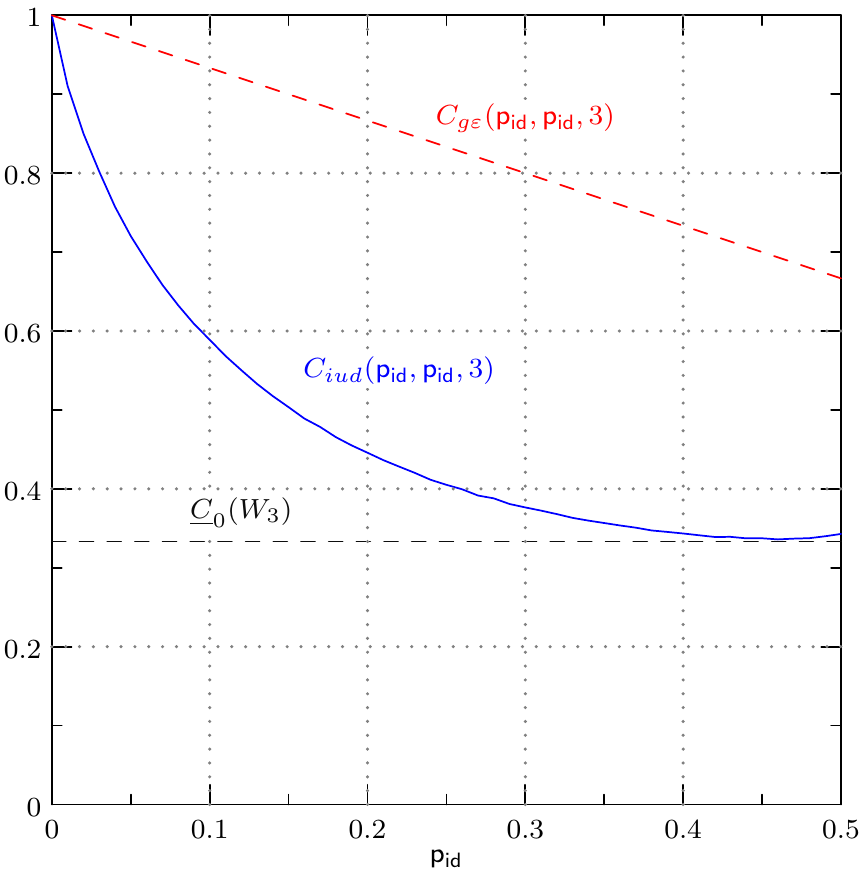}
\caption{The SIR $C_{iud}(\p_\msf{id}, \p_\msf{id}, 3)$ for the symmetric $3$-ary Markov state channel, along with the genie-erasure capacity $C_{g\e}(\p_\msf{id}, \p_\msf{id}, 3)$. Also shown is the lower bound on the zero-error capacity $\underline{C}_0(W_3)$ from Corollary \ref{cor_zeck}.}
\label{fig_2insdelsir}
\end{figure}

\section{Zero-error capacity} \label{sec_zec}
The \emp{zero-error capacity} $C_0$ of a discrete-memoryless channel was introduced by Shannon \cite{sha_56_tit_zec}, from which it is readily seen that a noisy discrete-output binary-input memoryless channel has a zero-error capacity of $0$, i.e., no information can be transmitted over such a channel with zero error. However, we will now show that the zero-error capacity of the noisy discrete-output binary-input channel in \eqref{eq_chnmod} is strictly positive. We will denote the generic channel in  \eqref{eq_chnmod} for all binary $\mcl{Z}$ processes by $W$.

\begin{prop} \label{prop_zec}
$C_0(W) = \frac{1}{2}$.
\end{prop}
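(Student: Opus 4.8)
The plan is to prove matching bounds: a repetition-style construction giving $C_0(W)\ge\frac12$, and a short combinatorial ``downsampling'' argument giving $C_0(W)\le\frac12$. Throughout I would use the fact that a codeword is decoded with zero error if and only if the decoder succeeds for \emph{every} state realization of positive probability (and every value of the boundary bit $X_0$); since for any full-support binary state process every $z\in\{0,1\}^n$ occurs, a zero-error code of block length $n$ for $W$ is simply a set $\mcl{C}\subseteq\{0,1\}^n$ of inputs together with a decoder that recovers the transmitted codeword for all $z$ and all $X_0$. Writing $N(n)$ for the largest such code, $C_0(W)=\limsup_{n\tends\infty}\frac1n\log_2 N(n)$, so it suffices to sandwich $N(n)$ between $2^{\lfloor n/2\rfloor}$ and $2^{\lfloor n/2\rfloor}$.

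For the lower bound I would exhibit, for even $n$, the code of all $x$ with $x_{2j-1}=x_{2j}$, $j=1,\dots,n/2$. Using the form $Y_i=X_{i-Z_i}$ of \eqref{eq_altchnmod}, the output at an even position satisfies $Y_{2j}\in\{X_{2j},X_{2j-1}\}=\{x_{2j}\}$, so the decoder reads off $Y_2,Y_4,\dots,Y_n$ and recovers the $n/2$ free bits exactly, independently of $\mcl{Z}$ and $X_0$ (note the indices $2j-Z_{2j}$ never reach $0$). This is a zero-error code of rate $\frac12$, so $N(n)\ge2^{n/2}$ for even $n$ and hence $C_0(W)\ge\frac12$.

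For the upper bound the first step is to characterize confusability. Because $X_0$ is arbitrary and each $Z_i$ is chosen independently, the set of possible outputs of an input $x$ is exactly the product set $\{0,1\}\times\prod_{i=2}^{n}\{x_{i-1},x_i\}$. Hence two inputs $x\neq x'$ are non-confusable iff $\{x_{i-1},x_i\}\cap\{x'_{i-1},x'_i\}=\emptyset$ for some $i\ge2$; since these are nonempty subsets of $\{0,1\}$, this occurs iff there is an index $i$ with $x_{i-1}=x_i\neq x'_i=x'_{i-1}$. The key observation is that such an $i$ forces $x$ and $x'$ to \emph{disagree at the two consecutive positions $i-1$ and $i$}, one of which is even. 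Therefore the projection $\pi(x)=(x_2,x_4,\dots)\in\{0,1\}^{\lfloor n/2\rfloor}$ is injective on every zero-error code, giving $N(n)\le2^{\lfloor n/2\rfloor}$ and $C_0(W)\le\frac12$. Combined with the construction, $C_0(W)=\frac12$ (and in fact $N(n)=2^{\lfloor n/2\rfloor}$ for all $n$, the odd case following by padding the even-length repetition code with a fixed bit).

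I expect the only real obstacle to be pinning down the confusability characterization correctly — in particular, recognizing that non-confusability cannot arise from a single isolated disagreement between $x$ and $x'$, but must come from a length-two constant window in one codeword facing the opposite constant window in the other, which is precisely what makes the even-coordinate projection injective. The remaining points (handling odd $n$, the boundary bit $X_0$, and passing from $N(n)\le2^{\lfloor n/2\rfloor}$ to the capacity statement, as well as noting that the argument is the same for the Bernoulli and binary Markov state channels since only the support of the state process enters) are routine.
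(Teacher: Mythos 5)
Your proof is correct, and while the lower bound is exactly the paper's ($2$-repetition, read off the even outputs), your upper bound takes a genuinely different route. The paper bounds $C_0(W)$ by the Shannon capacity of the worst admissible state process: from the genie-erasure bound \eqref{eq_1idcapub}, $C(1,1)\le\frac12$, and a zero-error code must in particular be decodable under the deterministic alternating realization $\mcl{Z}=0101\cdots$, so $C_0(W)\le\frac12$ whenever that realization is admissible. You instead give a purely combinatorial argument: the output set of $x$ is the product set $\{0,1\}\times\prod_{i\ge2}\{x_{i-1},x_i\}$, so non-confusability of $x,x'$ forces a coordinate $i$ with $x_{i-1}=x_i\ne x'_i=x'_{i-1}$, hence a disagreement at an even position, making the even-coordinate projection injective on any zero-error code. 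This buys more than the paper proves --- the exact count $N(n)=2^{\lfloor n/2\rfloor}$ for every finite $n$, with no appeal to capacity formulas or asymptotics --- and the confusability characterization is the right structural statement. The one scope difference worth flagging: your product-set description of the output set requires the state process to have full support on $\{0,1\}^n$, whereas the paper's argument needs only that the alternating sequence be admissible. For a constrained state process such as that of $W(\p_\msi,1)$ (no two consecutive ones, Section \ref{ssec_tdmr}), output sets are proper subsets of the product set, disjointness becomes easier, and your projection bound does not directly apply, while the paper's capacity argument still does. Since the proposition concerns the generic $W$ and the paper itself hedges with ``as long as the channel admits the alternating sequence,'' this is a difference in hypotheses rather than an error, but you should state the full-support assumption explicitly.
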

\begin{proof}
Consider the $2$-repetition code. The channel input process $\mcl{X}$ and the message process $\msc{M}$ satisfy the relationship $X_{2i-1} = M_i, X_{2i} = M_i$ for $i = 1, 2, \cdots$. Since $X_{2i-1} = X_{2i}$, $Y_{2i} = X_{2i} = M_i$ so that discarding the $Y_{2i-1}$s gives us $\msc{M}$ exactly, thereby achieving zero error at a rate $\frac{1}{2}$. Thus, $C_0(W) \ge \frac{1}{2}$.

It is obvious that the capacity of a channel is an upper bound on the zero-error capacity. From \eqref{eq_1idcapub}, we can see that $C(1, 1) \le \frac{1}{2}$, and the coding scheme above achieves a rate $0.5$ for any realization of $\mcl{Z}$. Thus, $C(1, 1) = \frac{1}{2}$. Therefore, as long as the channel admits the infinitely long alternating sequence $\mcl{Z} = 010101\cdots$, we have $C_0(W) = \frac{1}{2}$.

Note that both the Bernoulli state channel $W(\p)$ and the binary Markov state channel $W(\p_\msi, \p_\msd)$ can generate the infinite sequence $\mcl{Z} = 010101\cdots$, albeit with vanishing probability. It is also worth noting here that this infinite sequence satisfies the $(1, \infty)$ constraint (See Section \ref{ssec_tdmr}).
\end{proof}
Thus, unlike binary-input discrete memoryless channels, the binary-input discrete channel in \eqref{eq_chnmod} allows a non-zero information rate even under the severe requirement of zero-error. From this result and Fig.~\ref{fig_iidbsscap}, it is clear that even under the milder condition of asymptotically vanishing error probability, random linear coset coding achieves a lower rate than the $2$-repetition code, which guarantees zero errors, for the Bernoulli state channel over a range of $\p$ values. Referring to Fig.~\ref{fig_siraid}, the same can be said for the binary Markov state channels. However, by using a first-order Markov input (cf. Fig. \ref{fig_iidbsscap}), higher rates than that of the zero-error achieving scheme are achievable for all Bernoulli state channels, although only with asymptotically vanishing error probability.

The zero-error capacity of the $W(\p_\msi, \p_\msd, K)$ channel, denoted by $C_0(W_K)$, satisfies the following bounds.
\begin{cor} \label{cor_zeck}
$\frac{1}{K} \le C_0(W_K) \le \frac{K + 1}{2K}$.
\end{cor}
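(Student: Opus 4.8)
The plan is to prove the two inequalities separately, in each case adapting the argument behind Proposition~\ref{prop_zec}.

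\emph{Lower bound.} I would replace the $2$-repetition code used in Proposition~\ref{prop_zec} by a $K$-repetition code. Working with the delay form \eqref{eq_altchnmod}, $Y_i = X_{i - Z_i}$ with $Z_i \in \{0,1,\dots,K-1\}$, encode the message process $\msc{M}$ by setting $X_{K(i-1)+j} = M_i$ for $j = 1,\dots,K$ and all $i$. For \emph{any} realization of $\mcl{Z}$ the index $Ki - Z_{Ki}$ lies in $\{K(i-1)+1,\dots,Ki\}$, all of which carry $M_i$, so $Y_{Ki} = M_i$; keeping only the outputs $Y_K, Y_{2K}, \dots$ therefore recovers $\msc{M}$ exactly, yielding a zero-error code of rate $\frac{1}{K}$ and hence $C_0(W_K) \ge \frac{1}{K}$. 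This step is routine.

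\emph{Upper bound.} Here I would combine two facts already invoked in the excerpt: the zero-error capacity of a channel is at most its Shannon capacity, and the capacity of the $K$-ary Markov state channel is at most its genie-erasure capacity $C_{g\e}$. A zero-error code for $W_K$ decodes correctly for every admissible state sequence, in particular for every sample path of the symmetric process $\mcl{M}_1^{(K)}(\p_{\msf{id}}, \p_{\msf{id}})$, so it is a (zero-error, hence vanishingly-small-error) code for the instance $W(\p_{\msf{id}}, \p_{\msf{id}}, K)$; thus $C_0(W_K) \le C(\p_{\msf{id}}, \p_{\msf{id}}, K) \le C_{g\e}(\p_{\msf{id}}, \p_{\msf{id}}, K) = 1 - \p_{\msf{id}}\frac{K-1}{K}$. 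Evaluating at the extreme admissible value $\p_{\msf{id}} = \frac{1}{2}$ gives $C_0(W_K) \le 1 - \frac{K-1}{2K} = \frac{K+1}{2K}$.

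The only part that needs a little care is making this reduction precise: one checks that $\p_{\msf{id}} = \frac{1}{2}$ is a legitimate parameter (the boundary case $\p_\msi + \p_\msd = 1$, with the chain still irreducible and aperiodic thanks to the self-loops at states $0$ and $K-1$) and that every state sequence realizable under $W(\frac{1}{2}, \frac{1}{2}, K)$ is realizable under $W_K$, so that a zero-error code for $W_K$ is genuinely one for $W(\frac{1}{2}, \frac{1}{2}, K)$. I do not expect a real obstacle; the only choice is how much of the reduction to spell out versus pointing back to the analogous step in the proof of Proposition~\ref{prop_zec}. (Incidentally, running the alternating-sequence argument of Proposition~\ref{prop_zec} verbatim --- the path $\mcl{Z} = 0101\cdots$ is also admissible for $W_K$ --- would even give the stronger bound $\frac{1}{2}$, but $\frac{K+1}{2K}$ is all the corollary asks for.)
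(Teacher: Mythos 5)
Your proof matches the paper's: the $K$-repetition code gives the lower bound $\frac{1}{K}$, and the chain $C_0(W_K) \le \min_{\p_\msi,\p_\msd} C(\p_\msi,\p_\msd,K) \le \min_{\p_\msi,\p_\msd} C_{g\e}(\p_\msi,\p_\msd,K) = C_{g\e}(\frac{1}{2},\frac{1}{2},K) = \frac{K+1}{2K}$ gives the upper bound, which is exactly the paper's argument. One caveat on your parenthetical aside only: the Proposition~\ref{prop_zec} argument does not transfer verbatim for $K>2$, since no admissible parameter pair (which must satisfy $\p_\msi + \p_\msd \le 1$ at interior states) makes the state sequence deterministically alternate, so extracting the bound $\frac{1}{2}$ from the path $\mcl{Z}=0101\cdots$ would require an adversarial, single-realization counting argument rather than an appeal to the capacity of an instance like $W(1,1)$.
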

\begin{proof}
Repeating every bit $K$ times achieves zero error, since every $K^{\text{th}}$ bit is always correct for any realization of the $\mcl{Z}$ process. Hence, $C_0(W_K) \ge \frac{1}{K} = \underline{C}_0(W_K)$.

The smallest upper bound for the capacity is also an upper bound for the zero-error capacity. Thus,
\begin{align}
C_0(W_K) &\le \min_{\p_\msi, \p_\msd}C(\p_\msi, \p_\msd, K) \notag \\
&\le \min_{\p_\msi, \p_\msd}C_{g\e}(\p_\msi, \p_\msd, K) \notag \\
&= C_{g\e}(\frac{1}{2}, \frac{1}{2}, K) = \frac{K + 1}{2K}. \notag
\end{align}
Observe that when $K = 2$, the minimum genie-erasure capacity occurs at $\p_\msi = \p_\msd = 1$ and at this point the upper bound $C_{g\e}(1, 1, 2)$ is same as the lower bound $\underline{C}_0(W_2)$ which was used in the proof of Proposition \ref{prop_zec}.
\end{proof}
Note that the bounds proposed above are loose asymptotically, i.e., $\frac{1}{K} \tends 0$ and $\frac{K + 1}{2K} \tends \frac{1}{2}$ as $K \tends \infty$.

\section{Conclusions} \label{sec_conc}
We proposed a new write channel model for bit-patterned media recording that reflects the data dependence of write synchronization errors. This model generates both substitution-like errors and insertion-deletion errors whose statistics are determined by an underlying channel state process. For Bernoulli and Markov channel state models, we studied information-theoretic limits imposed by the channel, computing bounds and numerical estimates for the maximum achievable information rate under different assumptions on the channel input statistics. We then generalized the Markov channel state model to allow a channel state space of size $K>2$ and computed the SIR numerically for the case $K=3$.
Finally, we showed that the rate-$\frac{1}{2}$ 2-repetition code achieves the zero-error capacity over the new write channel when the channel state space is binary. Bounds on the zero-error capacity of the general $K$-ary Markov channel state model were also presented. In future work, we plan to combine the new write channel model in cascade with the partial-response readback channel model and investigate the achievable rates, as was done for the i.i.d. insertion-deletion channel in \cite{hu_10_tcom_insdel}. Furthermore, the channel model considered here can be readily represented using a factor graph and hence the construction of sparse graph-based codes with iterative decoding techniques seems possible.

\twobibs{
% \bibliographystyle{IEEEtran}
% \bibliography{/media/Acads/work/Bib/mybib}
}
{
% Generated by IEEEtran.bst, version: 1.13 (2008/09/30)

}

\end{document}